\documentclass[11pt]{article}
\usepackage{amsmath,amsfonts,amsthm,amssymb,ifthen,graphicx}
\usepackage{bbm}
\usepackage{url}
\graphicspath{{./}{./figures/}}
\newboolean{@laptop}
\newboolean{@todoon}
\setboolean{@laptop}{true}
\setboolean{@todoon}{true}

\ifthenelse{\boolean{@laptop}}{}{\usepackage{mathbbol}}

\def\todo#1{\ifthenelse{\boolean{@todoon}}{\marginpar{\textit{#1}}}{}}

\textheight 8.5in
\topmargin -0.2in
\oddsidemargin 0.2in
\textwidth 6.3in

\usepackage{times}
\usepackage{tabu}
\usepackage{thmtools}
\usepackage{thm-restate}

\usepackage{helvet}
\usepackage{courier}
\usepackage{colortbl,xcolor}
\usepackage{amsmath}
\usepackage{amssymb}

\usepackage[noend]{algpseudocode}

\usepackage{mathabx}
\usepackage{algorithm}
\usepackage{algorithmicx}
\usepackage[noend]{algpseudocode}
\newtheorem{lemma}{Lemma}

\newtheorem{theorem}[lemma]{Theorem}

\newtheorem{remark}{Remark}

\newtheorem{corollary}[lemma]{Corollary}

\frenchspacing
\setlength{\pdfpagewidth}{8.5in}
\setlength{\pdfpageheight}{11in}

\DeclareMathOperator*{\argmax}{\mathop{\arg\max}}

\newcommand{\preal}{\mathbb{R}_+}
\newcommand{\nreal}{\mathbb{R}_-}
\newcommand{\real}{\mathbb{R}}
\newcommand{\dprob}{Di-LoMuF}

\newcommand{\umaxfeprob}{maxf-\uprob}
\newcommand{\uprob}{LoMuF}
\newcommand{\prob}{MCF}
\newcommand{\dem}{\vec{d}}
\newcommand{\supp}{\vec{s}}
\newcommand{\capa}{\vec{c}}
\newcommand{\flow}{\vec{f}}

\begin{document}

\title{Target Location Problem for Multi-commodity Flow}

\author{
Xingwu Liu \thanks{
Institute of Computing Technology,
Chinese Academy of Sciences. University of Chinese Academy of Sciences. Beijing, China.
Email:\texttt{liuxingwu@ict.ac.cn}. 
}
\and
Zhida Pan \footnote{Corresponding author} \thanks{
Institute of Computing Technology,
Chinese Academy of Sciences. University of Chinese Academy of Sciences. Beijing, China.
Email:\texttt{zhidapan@gmail.com}. 
}
\and
Yuyi Wang \thanks{
ETH Zurich, Switzerland.
Email:\texttt{yuwang@ethz.ch}. 
}
}

\date{April 15, 2020}

\maketitle
\begin{abstract}
Motivated by scheduling in Geo-distributed data analysis, we propose a target location problem for multi-commodity flow (\uprob~for short). Given commodities to be sent from their resources, \uprob~aims at locating their targets so that the multi-commodity flow is optimized in some sense. \uprob~is a combination of two fundamental problems, namely, the facility location problem and the network flow problem. We study the hardness and algorithmic issues of the problem in various settings. The findings lie in three aspects. First, a series of NP-hardness and APX-hardness results are obtained, uncovering the inherent difficulty in solving this problem. Second, we propose an approximation algorithm for general undirected networks and an exact algorithm for undirected trees, which naturally induce efficient approximation algorithms on directed networks. Third, we observe separations between directed networks and undirected ones, indicating that imposing direction on edges makes the problem strictly harder. These results show the richness of the problem and pave the way to further studies.
\end{abstract}

\newpage

\section{Introduction}
Nowadays, data is generated geo-distributively at a much higher speed as compared to the existing data transfer speed; for instance, telescopes around the world bring us an unimaginable amount of astronomy data. There are two main reasons for having geo-distributed data: (1) Datacenters (DCs) are built across the globe. (2) Organizations prefer to use multiple clouds to increase reliability, security, and processing. Besides, there exist applications that process and analyze a huge amount of massively geo-distributed data to extract useful information. A typical scenario in processing geo-distributed data is that several analysis tasks are running simultaneously, and each requires a fraction of the collected data \cite{NagaA2018,PuQifan2015,Yue2016Fast,Yin2015,Khuller2016}. In addition, every analysis task moves needed data to a single location before the computation. Fig.\ \ref{fig:astronomy} shows an example of geo-distributed telescope data. 

The network bandwidth is a crucial factor in geo-distributed data movement and becomes the resource bottleneck. For example, the demand for bandwidth increased from 60 to 290 Tbps between the years 2011 and 2015 while the network capacity growth was not proportional. In 2015, the network capacity growth was only 40 percent, which was the lowest during the years 2011 and 2014\footnote{\url{https://www.telegeography.com/researchservices/global-bandwidth-research-service/}}. 
When applications (such as electromagnetic radiation and infrared ray analysis), each handling data from some datacenters, have to be deployed, there is no meaningful notion of distance. The latency (travel time of a single small packet) under low-congestion conditions tends not to be noticeable to the end-users. The real difficulty here is the underlying capacity of the network. If links become congested, then the latency will increase and throughput will suffer. A key issue is how to allocate enough bandwidth to each application without causing congestion on the network \cite{Zhao2020}. 

Hence, we need to choose proper locations for the tasks to reduce congestion. Specifically, we propose this target location problem for multi-commodity flow: Given sources of multiple commodities on a capacitated network, the goal is to locate the targets to maximize the flow value. 

We fix sources because, as our motivating example of geo-distributed data analysis shows, it is difficult, if not impossible, to change datacenters that collect data since for efficiency as these datacenters should be close to data generators. However, it is much more flexible to choose the target locations where the analysis tasks are performed. 

The multi-commodity flow problem (MCF) is one of the most fundamental problems with a wide variety of scientific and engineering applications that have been studied intensively \cite{Shepherd2015,Monis2019}. In the most typical scenario, a finite number of commodities have to be sent from their sources to targets on a capacitated network. Each commodity has its own flow, and the commodities interact when their flows compete for capacity on common edges. 

There are two general classes of MCF. One is network analysis which, based on a given network configuration, finds the optimal flow pattern for some objective function. The most studied objective functions include maximizing flow values and minimizing flow costs. The other belongs to network synthesis which seeks an optimal network configuration satisfying certain requirements. 

In both classes, the targets of the commodities are taken for granted. To our surprise, researches have long neglected how targets are chosen. This paper is devoted to initializing such a theory. 

\begin{figure}
    \centering
    \includegraphics[scale = 0.4]{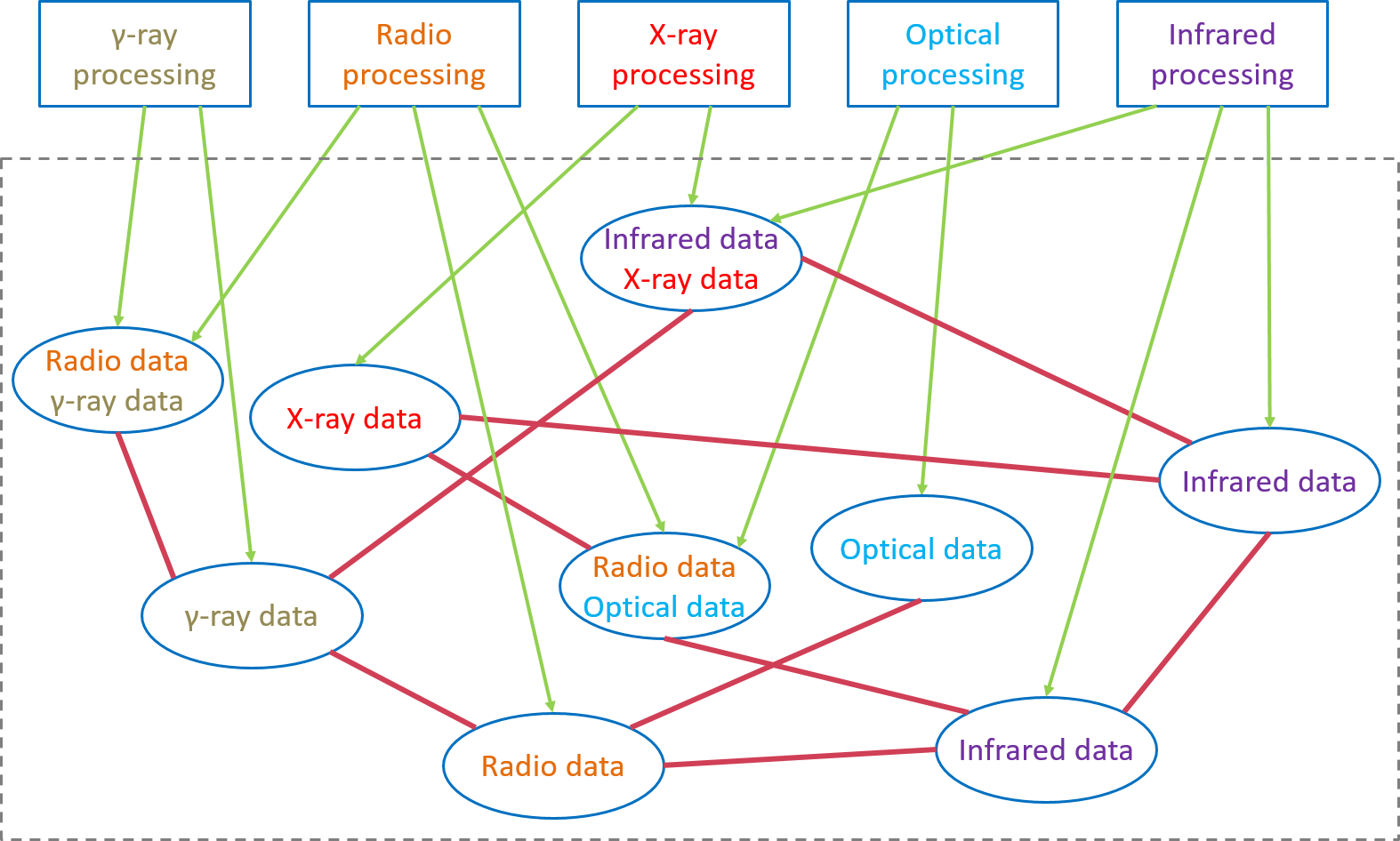}
    \caption{An overview of geo-distributed astronomy data and corresponding tasks}
    \label{fig:astronomy}
\end{figure}

Our proposed problem extends the MCF framework. It does not belong to either of the two classes. It is a combination of the facility location problem and the network flow problem which are inherently related \cite{An2014}. Facility location is a branch of operations research related to locating or positioning at least a new facility among several existing facilities to optimize (minimize or maximize) at least one objective function. It is among the most fundamental problems in operations research and theoretical computer science \cite{Vazirani2001Approximation}. Facility location met network flow in 1990 \cite{Tamura1990} and has inspired a series of work \cite{ Ito2009The,ArataLocating2002,Kortsarz2008,Andreev2009Simultaneous,Hebler2016}. However, all the published works minimize costs of the selected sources or targets (not the flow cost which, together with flow value, is the objective of network flow problems), and never consider the multi-commodity setting. The most crucial difference lies in that our combination is inherent, meaning that the objective is to optimize the flow value, but the literature focuses on the cost of the selected nodes rather than the cost of flow. Another benefit of our framework is that it can naturally extend almost every network flow problem, e.g., flow cost minimization. 

Our model has other applications, such as Web server deployment. There are serving various demands from widely-distributed users, for example, requesting for different online video, where should the servers be located so that the users have a good experience? Again we do not care distance, and the key objective is to optimize the available bandwidth. There are more motivating examples, say, network-flow based evacuation planning for an emergency where shelters have to be selected, and congestion decides the efficiency of evacuation. Interested readers are referred to \cite{Hebler2016}.

These real-world scenarios justify our problem's critical features: The commodities are only partially determined since the targets are not given and have themselves to be optimized, and a decisive factor of the optimization is the bandwidth rather than any notion of distance. This well motivates our problem.






\subsection{Results and Discussion}
We propose a novel model of the target location for multi-commodity flow (\uprob). On the one hand, we figure out the hardness results of various versions. On the other hand, we design algorithms for several versions. The results are as follows. 
\begin{enumerate}
\item We show that the \uprob~problem is NP-hard on general undirected graphs. 

We know that if the targets are fixed, the problem degenerates to one normal multi-commodity flow problem (allowing fractional flows) and becomes tractable in polynomial time, which shows that the most challenging part of this problem is indeed how to locate the targets. 
\item We design a polynomial-time algorithm solving \uprob~on trees. 

Trees are important network structures in practice. Compared to the NP-hardness result above, the fact that there is only one path connecting a source and the target on a tree simplifies the problem. Our algorithm is elegant and surprisingly shows that the interaction between different commodities becomes not harmful on trees.  

\item We present a $\max\{\theta-1,1\}$-approximation algorithm for \uprob~on general undirected graphs, where $\theta$ is the largest source number among all commodities.

This result actually shows that, when $\theta \le 2$ (the so-called bi-source cases) the problem can also be solved efficiently, but (take into account the NP-hardness result above) becomes intractable when $\theta\ge 3$. 

\item For \uprob~on directed graphs (\dprob), we prove that it is also NP-hard and even cannot be efficiently approximated with a ratio less than 2. 

In fact, we also show that \uprob~on undirected graphs can be reduced to the directed case \dprob, and then \dprob~should be even harder. 

\item \dprob~also remains NP-hard on symmetric di-paths and bi-source supply vectors. 

These are clear separations between undirected \uprob~and \dprob, since undirected \uprob~is efficiently solvable on trees while \dprob~is even difficult on paths, a very special case of trees. As we pointed out above, the bi-source instances are easy for undirected \uprob, but not for \dprob. 

\item For the special case on symmetric di-trees, \dprob~has a polynomial-time 2-approximation algorithm.

Though we have seen several hardness results of \dprob, for a special but still meaningful subset, where every link has the same capability of downloading and uploading, we can obtain an efficient approximation algorithm. 

\item We show that our results above can also be extended to other variants of \uprob~such as maximum sum flows, unsplittable flows, restricted candidate targets, maximum feasible flows, and so on. For the unsplittable version, we show that it cannot be approximated within ratio 2. For the version with restrictions on targets, it is NP-hard on uni-source supply vectors and stars and cannot be efficiently approximated within ratio $\frac{7}{6}$ on trees. For the maximum feasible flows version, we prove that for any constant $\epsilon>0$, unless NP=ZPP, it cannot be approximated within $O(k^{1-\epsilon})$ on $k$ supply vectors. 

This shows that the framework of the new location problems has a powerful capability of modeling different scenarios in practice and enriches the theory of location problems and network flow problems. 

\end{enumerate}

\subsection{Related Work}
There is an increasing vast literature on multi-commodity flow and its single-commodity special case \cite{Shepherd2015}. Basically, there are two types of optimization objectives, namely, minimum cost and maximum flow which is the focus of this paper. The main theme of maximum flow in recent years is improving the efficiency of approximation algorithms \cite{Monis2019,Kelner2014,Teng2010,LeeYinTat2013,PengRichard2016,Madry2016,Kelner2012,Sherman2017,Chekuri2013The}. The flow-cut duality is also a challenging issue and has attracted much attention from researchers \cite{krauthgamer2019flow,salmasi2019constant}. 

Facility location has flourished ever since the 1960s and remains an active topic in operations research and theoretical computer science \cite{Hakimi1964,Labbe1992,Vazirani2001Approximation,Labbe1992}. Though generally, no constant-ratio approximation algorithm exists, it can be constant-approximated on metric spaces. One of the main threads of research is to improve the approximation ratio in various situations \cite{Shmoys1997Approximation,LiA,GuhaGreedy}. 

Though inherently related to multi-commodity flow\cite{An2014}, facility location got to be combined with network flow only in 1990 \cite{Tamura1990}, when the source location problem was proposed. Roughly speaking, the mission of the source location problem on a network is to find a set of sources from which enough flow can be sent to each prescribed target. In addition to flow requirements, connectivity and vertex coverage are also frequently used constraints. Work in this line can be classified into two categories. One is independent source location, meaning that the flows to different targets do not interact \cite{ArataLocating2002,SakashitaMinimum,Attila2008,MamadaOptimal2002,MamadaAn2006,Ito2009The,ArataLocating2002,Kortsarz2008}. The other is simultaneous source location, where the flows concurrently exist and interact by competing edge capacities \cite{Andreev2009Simultaneous,Hebler2016}. An interesting application is emergency evacuation planning \cite{Hebler2016}, where shelters are to be located where residents in a disaster can move to as fast as possible. In such applications, capacities are also usually imposed on network nodes, rather than just on edges in typical network flow models. All the mentioned works have two common features. First, essentially only a single commodity is considered which is multi-source multi-target. Second, the objective is to optimize some measures of the selected sources (say, total cost), rather than the properties of the flow (say, flow value). This is in sharp contrast to our proposed problem.

\section{Preliminaries and Problem Statement}\label{sec:undigraphs}
In this section, we review key notions and notations used in this paper, and formally define the location problem. 

\subsection{Preliminaries}\label{Preliminaries}
Let $\real$ ($\preal,\nreal$, respectively) represent the set of (non-negative, non-positive, respectively) real numbers. We use $\vec{x}$ for a vector, and $\vec{x}(y)$ for its $y$-th entry. When we denote a set by an upper-case letter, we usually write the corresponding (subscripted) lower-case letter for the members.  

A network is a capacitated graph $G=(V,E,\capa)$, where $V$ is the vertex set, $E$ is the edge set, and $\capa\in \preal^E$ assigns capacities to the edges. 
We first mainly focus on undirected graphs in this paper, and will consider directed graphs in Section \ref{sec:directedgraphs}. 
For any $v,v'\in V$, we use $\langle v,v'\rangle$, or $\langle v',v\rangle$ interchangeably, to denote the edge between $v,v'$. A commodity is described by a demand vector $\dem\in \real^V$ satisfying $\sum_{v\in V}\dem(v)=0$, where any $v$ such that $\dem(v)<0$ ($\dem(v)>0$, respectively) is called a source (a target, respectively). Intuitively, each source $v$ has to sent out $\dem(v)$ units of the commodity, and in total $\dem(u)$ units are delivered to target $u$. 
The vertex set of a graph $G$ is denoted by $V(G)$. 

To specify flows over a network, we always arbitrarily orient all the edges and keep the orientation implicit unless necessary. For any $v\in V$, let $E_-(v)$ ($E_+(v)$, respectively) stands for the set of incoming (outgoing, respectively) edges. A flow is a vector $\flow\in \real^E$, which for any edge $e\in E$, means $|\flow(e)|$ units of transportation along $e$ in orientation if $\flow(e)>0$, and opposite direction otherwise. Given flows $\flow,\flow'\in \real^E$, we write $\flow\lesssim\flow'$ if $|\flow|\le |\flow'|$ for any $e\in E$. A flow $\flow$ is said to satisfy a demand vector $\dem$, if for any $v\in V$, $\dem(v)=\sum_{e\in E_-(v)} \flow(e) - \sum_{e\in E_+(v)} \flow(e)$. A multi-commodity flow, which means a set $F$ of flows, is valid if its congestion $\sum_{\flow\in F}|\flow(e)|$ along any edge $e\in E$ is at most $\capa(e)$. 

The maximum concurrent problem (\prob~for short) has been extensively and is still being actively studied. Specifically, given demand vectors $\dem_i,1\le i\le k$ on a capacitated  graph $G$, the mission of \prob~is to find the maximum $\lambda$ such that $\lambda\dem_i,1\le i\le k,$ can be satisfied by a valid multi-commodity flow on $G$. The optimum $\lambda$ will be denoted by $\lambda(G;\dem_1,\cdots,\dem_k)$. 

Let's recall some properties of \prob. 
\begin{lemma}\label{le:MuFinP}
\prob~lies in P.
\end{lemma}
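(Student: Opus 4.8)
The plan is to express \prob~as a linear program whose size is polynomial in the input and then invoke the polynomial-time solvability of linear programming. For each commodity $i$ I introduce a flow vector $\flow_i\in\real^E$ and treat the scaling factor $\lambda$ as an extra scalar variable; the objective is simply to maximize $\lambda$. The requirement that $\flow_i$ satisfy $\lambda\dem_i$ is already linear, since for every $i$ and every $v\in V$ it reads $\sum_{e\in E_-(v)}\flow_i(e)-\sum_{e\in E_+(v)}\flow_i(e)=\lambda\,\dem_i(v)$. The only nonlinear ingredient is the validity (congestion) constraint $\sum_{i=1}^k|\flow_i(e)|\le\capa(e)$, which is where the real work lies.

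First I would linearize the congestion constraint by the standard device of splitting each flow variable into nonnegative parts: replace $\flow_i(e)$ by $f_i^+(e)-f_i^-(e)$ with $f_i^+(e),f_i^-(e)\ge0$, and rewrite the constraint as $\sum_{i=1}^k\bigl(f_i^+(e)+f_i^-(e)\bigr)\le\capa(e)$. At any optimal solution one may assume $\min\{f_i^+(e),f_i^-(e)\}=0$, since subtracting the minimum from both parts leaves the net flow $\flow_i(e)$ unchanged while only relaxing the congestion constraint; hence $f_i^+(e)+f_i^-(e)=|\flow_i(e)|$ and the linearized program is equivalent to the original one. The resulting program has $O(k|E|)$ flow variables plus $\lambda$, together with $O(k|V|+|E|)$ constraints, all of bit-size polynomial in the encoding of $G$ and the $\dem_i$.

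Next I would observe that the optimum is finite and attained: whenever some $\dem_i\neq\vec{0}$, any cut separating the sources of commodity $i$ from its targets bounds $\lambda$ above by the ratio of the cut capacity to the net demand crossing it, so the feasible polyhedron yields a bounded maximum (the degenerate all-zero case gives $\lambda=\infty$ trivially and needs no computation). Since a linear program can be solved in time polynomial in the number of variables, the number of constraints, and the bit-size of the data (for instance by the ellipsoid method or by interior-point methods), the value $\lambda(G;\dem_1,\cdots,\dem_k)$ is computable in polynomial time, which places \prob~in P.

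The only point genuinely requiring care is the equivalence between the linearized congestion constraint and the original absolute-value constraint; everything else is a routine translation into LP form. I therefore expect no real obstacle, as this is the classical LP formulation of concurrent multi-commodity flow, and the lemma follows directly from polynomial-time linear programming.
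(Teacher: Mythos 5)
Your proof is correct and matches the paper's intent exactly: the paper states this lemma without proof, remarking only that the standard algorithm for \prob{} is based on linear programming, which is precisely the LP formulation (with the sign-splitting linearization of the congestion constraint) that you spell out. No gaps; your handling of the absolute-value constraint and the boundedness of $\lambda$ is sound.
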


As mentioned in \cite[page~863]{cormen2009introduction}, there is no known purely combinatorial algorithm solving \prob~exactly and efficiently. The only commonly used algorithm is based on linear programming. 

A multi-commodity flow $F$ on a capacitated graph $G$ is said to be a decomposition of flow $\flow$, if $\flow(e)=\sum_{\flow'\in F}\flow'(e)$ and $|\flow(e)|=\sum_{\flow'\in F}|\flow'(e)|$ for any edge $e$ of $G$. 
\begin{lemma}\label{le:decomposable}
Arbitrarily fix a demand vector $\dem$ on a capacitated graph $G$. Suppose $\dem$ has exactly one target $t$. Then any flow satisfying $\dem$ can be decomposed into a multi-commodity flow which satisfies the demand vectors $\{\dem_v: v\textrm{ is a source of }\dem\}$. Here, each $\dem_v$ is such that for any vertex $u$ of $G$,  
$$\dem_v(u)=\begin{cases}
                       \dem(v)& \textrm{if }u=v\\
                       -\dem(v)& \textrm{if }u=t\\
                       0& \textrm{otherwise}
                  \end{cases}.$$
\end{lemma}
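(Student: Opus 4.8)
The plan is to reduce the claim to the classical decomposition of a network flow into paths and cycles, after first turning the signed flow $\flow$ into an ordinary nonnegative flow on a directed graph. Concretely, I would reorient every edge $e$ in the direction in which $\flow$ actually travels, i.e. keep the implicit orientation when $\flow(e)\ge 0$ and reverse it otherwise, placing the nonnegative value $|\flow(e)|$ on the resulting arc; call the resulting directed network $D$. Since the quantity $\sum_{e\in E_-(v)}\flow(e)-\sum_{e\in E_+(v)}\flow(e)$ is invariant under reorientation and equals $\dem(v)$ by hypothesis, in $D$ the net outflow minus inflow at each vertex $v$ (its \emph{excess}) equals $-\dem(v)$. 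Hence the sources of $\dem$ (vertices with $\dem(v)<0$) are exactly the vertices of strictly positive excess, the single target $t$ is the unique vertex of negative excess, and every other vertex is flow-conserving.

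Next I would invoke the standard flow-decomposition argument on $D$: greedily peel off weighted simple directed paths and simple directed cycles, each carrying flow only in the arc directions of $D$, until all arc flow is exhausted. A walk started at a positive-excess vertex can never stall at a conserving vertex (positive inflow forces a positive out-arc) nor at a source (its positive excess forces a positive out-arc), so it terminates either by closing a cycle or by reaching the unique sink $t$; in the former case we peel a cycle, in the latter a path from that source to $t$. Because a peeled path never revisits its start, peeling it lowers the start's excess by exactly its weight, and cycles change no excess; consequently the total weight of paths issuing from a source $v$ equals its excess $-\dem(v)=|\dem(v)|$. Translating every path and cycle back to the original undirected edges, each piece uses each of its edges with the same sign as $\flow$, so every piece is conformal with $\flow$.

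I would then assemble the decomposition. For each source $v$ define $\flow_v$ as the signed sum of all paths that start at $v$, and additionally dump all the peeled cycles onto one arbitrarily fixed source. Each $\flow_v$ routes exactly $|\dem(v)|$ units from $v$ to $t$ and is balanced elsewhere; since appending a cycle leaves every vertex's net balance unchanged, the cycle-laden $\flow_v$ still satisfies $\dem_v$. Because all pieces are conformal with $\flow$, summing the per-arc identities from the peeling simultaneously yields the two required conditions, $\flow(e)=\sum_v\flow_v(e)$ and $|\flow(e)|=\sum_v|\flow_v(e)|$ for every edge $e$, so $F=\{\flow_v: v\text{ a source of }\dem\}$ is the desired decomposition satisfying $\{\dem_v\}$.

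The reorientation bookkeeping and the verification of the two identities are routine; the one genuine obstacle is the conformality condition $|\flow(e)|=\sum_v|\flow_v(e)|$, which forbids any cancellation among the component flows. This is precisely why I pass to $D$ and use a \emph{conformal} path-cycle decomposition rather than an arbitrary one, guaranteeing that no two pieces ever traverse an edge in opposite directions, and the single-target hypothesis is exactly what lets me charge an entire path to the source at which it originates. The only delicate freedom is the placement of the flow-cycles, but since a cycle contributes zero to every demand, assigning all of them to one source is harmless.
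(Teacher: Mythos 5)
Your proof is correct. The paper states this lemma without proof, treating it as a standard preliminary fact, and your argument --- reorient each edge along the direction $\flow$ actually travels, apply the classical conformal path/cycle decomposition of the resulting nonnegative flow, charge each peeled path to the source at which it originates, and park the circulations on one arbitrary source --- is precisely the standard argument one would supply; in particular it correctly secures the no-cancellation condition $|\flow(e)|=\sum_{v}|\flow_v(e)|$ through conformality, which is the only point where a naive decomposition could fail.
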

Note that the decomposition in Lemma \ref{le:decomposable} is not necessarily unique. Any such one will be called a canonical decomposition of the flow.

Given any vertex subset $U$ of an graph $G=(V,E,\capa)$, the cut induced by $U$, denoted by $Cut(U)$, is defined to be the set of edges bridging $U$ and $V\setminus U$. Let $E_-(U)=Cut(U)\bigcap (\bigcup_{u\in U} E_-(u))$ be the set of edges coming into $U$, and $E_+(U)=Cut(U)\setminus E_-(U)$. 
\begin{lemma}\label{le:cutflow}
Suppose that $\flow$ is a flow satisfying a demand vector $\dem$ on a capacitated graph $G=(V,E,\capa)$. Then for any $U\subseteq V$, $\sum_{e\in E_-(U)} \flow(e) - \sum_{e\in E_+(U)} \flow(e)=\sum_{u\in U}\dem(u)$.
\end{lemma}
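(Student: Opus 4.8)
The plan is to start from the defining property of flow satisfaction and sum it over all vertices of $U$, then reorganize the resulting double sum by edges rather than by vertices. Concretely, since $\flow$ satisfies $\dem$, for every $u\in U$ we have $\dem(u)=\sum_{e\in E_-(u)} \flow(e) - \sum_{e\in E_+(u)} \flow(e)$. Summing over all $u\in U$ gives
\begin{equation*}
\sum_{u\in U}\dem(u)=\sum_{u\in U}\left(\sum_{e\in E_-(u)}\flow(e)-\sum_{e\in E_+(u)}\flow(e)\right),
\end{equation*}
and the task reduces to showing that the right-hand side collapses to $\sum_{e\in E_-(U)}\flow(e)-\sum_{e\in E_+(U)}\flow(e)$ once the contributions are re-indexed by edge.

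The core of the argument is a classification of each edge $e=\langle v,v'\rangle$ (recall the orientation is fixed but arbitrary; say $e$ is oriented from $v$ to $v'$, so $e\in E_+(v)$ and $e\in E_-(v')$) according to how its endpoints meet $U$. First I would handle the \emph{internal} edges, those with both $v,v'\in U$: such an edge contributes $-\flow(e)$ through the term $E_+(v)$ and $+\flow(e)$ through the term $E_-(v')$, so its net contribution to the double sum is zero. Next, edges with both endpoints outside $U$ never appear, since neither endpoint is summed over. The remaining edges are precisely the \emph{cut} edges, each with exactly one endpoint in $U$; I would split these by orientation. An edge oriented into $U$ (i.e.\ $v\notin U$, $v'\in U$) lies in $E_-(U)$ and contributes $+\flow(e)$ via $E_-(v')$, whereas an edge oriented out of $U$ (i.e.\ $v\in U$, $v'\notin U$) lies in $E_+(U)$ and contributes $-\flow(e)$ via $E_+(v)$. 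Collecting these exactly reproduces $\sum_{e\in E_-(U)}\flow(e)-\sum_{e\in E_+(U)}\flow(e)$, completing the proof.

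The step requiring the most care is the bookkeeping for internal edges, where one must confirm that every edge with both endpoints in $U$ is counted exactly twice with opposite signs so that it cancels cleanly; this is where the interplay between the per-vertex sets $E_-(v),E_+(v)$ and the cut-level sets $E_-(U),E_+(U)$ must be matched up correctly against the implicit orientation. Once that cancellation is verified, the identification of each cut edge with the right one of $E_-(U)$ or $E_+(U)$ is a direct consequence of the definitions $E_-(U)=Cut(U)\cap\bigl(\bigcup_{u\in U}E_-(u)\bigr)$ and $E_+(U)=Cut(U)\setminus E_-(U)$, and the rest is routine.
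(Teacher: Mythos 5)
Your argument is correct: summing the per-vertex satisfaction identity $\dem(u)=\sum_{e\in E_-(u)}\flow(e)-\sum_{e\in E_+(u)}\flow(e)$ over $u\in U$ and re-indexing by edges — so that internal edges cancel in opposite-sign pairs, external edges never appear, and each cut edge lands in exactly one of $E_-(U)$ or $E_+(U)$ with the matching sign — is the standard proof of this conservation/cut identity. The paper itself states Lemma~\ref{le:cutflow} without proof, treating it as a known fact, so there is nothing to contrast with; your write-up supplies precisely the routine verification the authors omitted, and the bookkeeping you flag (each internal edge counted twice with opposite signs, and the match between $E_\pm(v)$ and $E_\pm(U)$ under the implicit orientation) is handled correctly.
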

\subsection{Target Location problem}
Intuitively, our goal is to properly locate targets for multiple commodities. We formulate this problem in this subsection. 

Given a capacitated graph $G=(V,E,\capa)$, any $\supp\in \nreal^V$ is called a supply vector on $G$. For any supply vector $\supp$ and $v\in V$, we define a demand vector $\supp \circ v$ such that for any $u\in V$, 
$$(\supp \circ v)(u)=\begin{cases}
                       \supp(u)& \textrm{if }u\neq v\\
                       -\sum_{w\in V\setminus\{v\}}\supp(w)& \textrm{otherwise}
                  \end{cases}.$$

It is time to formulate the problem of target location for maximizing concurrent multi-commodity flow, \emph{\uprob}~for short. Given supply vectors $\supp_1,\cdots,\supp_k$ on a capacitated graph $G$,  \uprob~aims at finding $v_1,\cdots,v_k$ such that $\lambda(G;\supp_1\circ v_1,\cdots,\supp_k\circ v_k)$ is maximized. By abuse of the notation, the optimum objective value is again denoted by $\lambda(G;\supp_1,\cdots,\supp_k)$. 



\section{Hardness and Algorithms of \uprob}
We begin with studying the hardness of \uprob. Our work refers to a well-known NP-complete problem, 3-dimensional matching (3-DM for short). Though \uprob~is NP-hard in general, we devise an algorithm solving \uprob~problems on trees efficiently, and show that a simple strategy could be a not-bad solution for graphs with bounded sources. 

\subsection{Hardness Result}
A 3-DM instance is a quadruple $(X,Y,Z,W)$, where $X,Y,Z$ are pairwise disjoint finite sets of equal size, and $W\subseteq \{\{x,y,z\}: x\in X,y\in Y,z\in Z\}$. The goal is to decide whether $W$ contains a perfect matching, namely, a subset $W'\subseteq W$ such that $|W'|=|X|$ and $\bigcup_{w\in W'}w=X\bigcup Y\bigcup Z$? The trivial cases where $\bigcup_{w\in W}w\neq X\bigcup Y\bigcup Z$ will not be considered.

We first show that \uprob~is NP-hard, which is more or less a surprise, compared with Lemma \ref{le:MuFinP}.
\begin{theorem}\label{thm:UndiNPC}
Given supply vectors $\supp_1,\cdots,\supp_k$ on a capacitated graph $G$, it is NP-complete to decide whether $\lambda(G;\supp_1,\cdots,\supp_k)\ge 1$.
\end{theorem}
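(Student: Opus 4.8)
The plan is to prove NP-hardness by reduction from 3-DM. Membership in NP is immediate: given the proposed targets $v_1,\dots,v_k$, the value $\lambda(G;\supp_1\circ v_1,\cdots,\supp_k\circ v_k)$ can be computed in polynomial time by Lemma~\ref{le:MuFinP}, so a certificate is just the list of chosen targets. The substance is the hardness direction, so I would spend my effort building a gadget that forces the target-selection structure of \uprob~to mimic a perfect matching in $W$.

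The approach is to turn each triple $w=\{x,y,z\}\in W$ into a commodity and let the act of choosing that commodity's target correspond to deciding whether $w$ is included in the matching. Concretely, I would build a graph $G$ with a vertex for each element of $X\cup Y\cup Z$ (together with auxiliary ``hub'' or ``collector'' vertices as needed), and for each $w\in W$ introduce a supply vector $\supp_w$ whose sources sit at the three element-vertices $x,y,z$ of the triple. The key design requirement is that the only way to route \emph{all} the commodities concurrently at rate $\lambda\ge 1$ without exceeding capacities is to have each element of $X\cup Y\cup Z$ ``consumed'' by exactly one commodity --- exactly the combinatorial content of a perfect matching. I would enforce this by placing tight capacities on the edges incident to each element-vertex (capacity just enough for one unit of commodity to pass), so that the element-vertices behave like unit-capacity resources that at most one triple can claim. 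Choosing a target for $\supp_w$ then encodes whether triple $w$ grabs its three elements; a feasible concurrent flow of value $\ge 1$ exists if and only if the chosen triples partition $X\cup Y\cup Z$.

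The main obstacle I anticipate is the subtle interaction between target \emph{location} and flow \emph{interference}. In \uprob~one does not select a subset of triples directly; one must choose \emph{a} target $v_w$ for \emph{every} commodity $\supp_w$, and every commodity's flow is present simultaneously and competes for capacity. So I need the gadget to offer each commodity a ``dummy'' or ``parking'' target that lets an unused triple route its supply harmlessly (along private, uncongested edges) without claiming any element-vertex, while a ``real'' target choice forces the commodity to route through and thereby saturate its three element-vertices. Getting the capacities exactly right so that (i) $|X|$ commodities can simultaneously pick real targets iff those triples form a perfect matching, and (ii) the remaining commodities can always be parked without conflict, is the delicate part; I expect to rely on Lemma~\ref{le:cutflow} applied to the element-vertices (or small cuts around them) to argue that any value-$1$ concurrent flow forces integral, disjoint usage, converting the continuous/fractional flow back into a combinatorial matching.

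Finally I would verify both directions of the reduction. For the forward direction, given a perfect matching $W'$, I assign each $w\in W'$ its real target and each $w\notin W'$ its parking target, then exhibit an explicit valid concurrent flow achieving $\lambda\ge 1$ by sending each matched triple's supply through its three (now uncontested) element-vertices. For the reverse direction, I assume $\lambda(G;\supp_1,\cdots,\supp_k)\ge 1$, use the cut argument to show each element-vertex is fully used by at most one commodity, and conclude that the commodities routing through element-vertices must cover $X\cup Y\cup Z$ exactly once, yielding the desired perfect matching. Since 3-DM is NP-complete and the construction is clearly polynomial in $|X|+|W|$, this establishes NP-completeness.
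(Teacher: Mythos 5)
Your NP-membership argument is the same as the paper's and is fine. Your reduction, however, inverts the paper's architecture in a way that leaves the crucial forcing mechanism unspecified, and as sketched the reverse direction would fail. The paper places the \emph{sources} at six hub vertices $t_X,t_Y,t_Z,t'_X,t'_Y,t'_Z$ shared by all $l=|W|$ commodities (the first $k$ sourced at the unprimed hubs, the rest at the primed ones), and lets the \emph{targets} range over the triple-vertices $W$; the element vertices are transit points, and the cut $Cut(W)$ separating the three element-blocks from $W$ has total capacity exactly $3l$, which equals the minimum possible congestion of $l$ commodities each forced to cross it at least three times. That single counting argument is what pins every target into $W$, saturates the cut, and then lets the unit capacities on $\langle t_X,x\rangle$ force the first $k$ targets to be pairwise disjoint triples. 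You instead put the sources of commodity $\supp_w$ at the element vertices $x,y,z$ of $w$ and offer every commodity a parking target. Two problems follow. First, if every commodity can be parked harmlessly, then $\lambda\ge 1$ holds for \emph{every} instance and the reduction decides nothing; you need a quantitative reason (a tight cut, or parking capacity totalling only $l-k$ commodities) why at least $k$ commodities must engage the element gadgets, and that is precisely the step you defer as ``the delicate part.'' The paper's primed hubs with capacity $|W_e|-1$ are exactly such a calibrated parking lane, and without an analogue your gadget has no no-instance.

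Second, placing the sources at the element vertices undercuts your own unit-capacity idea: element $x$ is a source of \emph{every} commodity $\supp_w$ with $x\in w$, so $x$ must emit $|\xi(x)|$ units of flow no matter which triples are ``chosen,'' and edges incident to $x$ with ``capacity just enough for one unit'' cannot accommodate this. The element-as-unit-resource argument only works if flows pass \emph{through} the element on their way from a shared hub to a triple-vertex (as in the paper), not if they originate there. Your forward direction and your appeal to Lemma~\ref{le:cutflow} are in the right spirit, but until you (i) specify the capacity budget that forces $k$ commodities out of the parking lane and (ii) relocate the bottleneck so that an element is a transit resource rather than a multi-commodity source, the reduction does not establish hardness.
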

\begin{proof}
Choose a target $v_i$ for supply vectors $\supp_i$, for any $1\le i\le k$. Due to Lemma \ref{le:MuFinP}, we can use $v_1,\cdots,v_k$ as a certificate to check whether $\lambda(G;\supp_1,\cdots,\supp_k)\ge 1$. This means that the decision problem lies in NP.

To prove NP-completeness, it suffices to establish a reduction from 3-DM.

\begin{figure}
  \centering
  \includegraphics[width=0.65\textwidth]{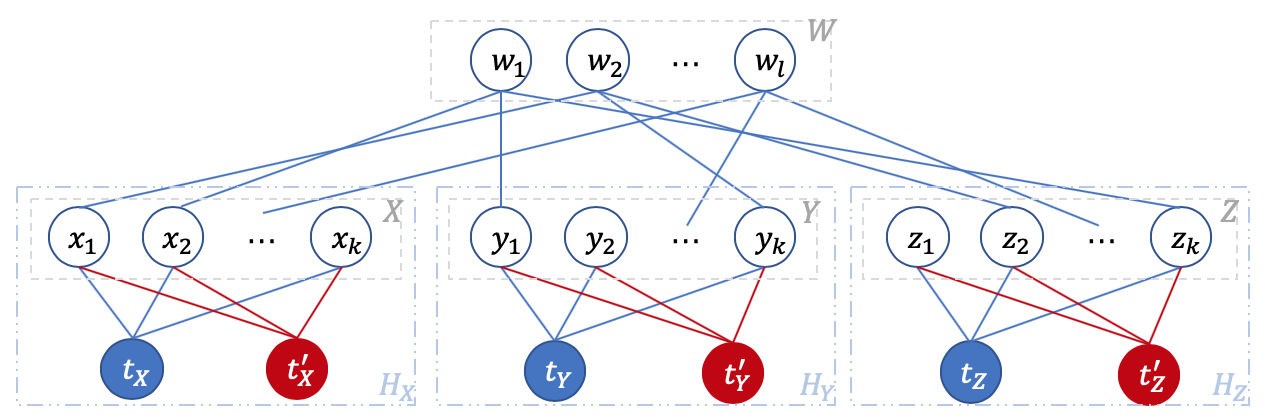}
\caption{The graph to which 3-DM is reduced.}\label{fig:UndiNPC}
\end{figure}   

Given a 3-DM instance $(X,Y,Z,W)$ with $|X|=k$ and $|W|=l$, we construct an capacitated graph $G=(V,E,\capa)$ as illustrated in Figure \ref{fig:UndiNPC}. Specifically, $G$ consists of three subgraphs $H_X,H_Y,H_Z$ connected via $W$. $H_X$ is a complete bipartite graph of vertex sets $X$ and $T_X=\{t_X,t'_X\}$, and any $x\in X$ is adjacent to $w\in W$ if and only if $x\in w$, likewise for $H_Y,H_Z$. All the edges are oriented upward in Figure \ref{fig:UndiNPC}.

As to the capacity, let $E'$ be the set of red edges, namely, those incident to $t'_X,t'_Y$ or $t'_Z$. For any $e=\langle v,t \rangle\in E'$ with $v\in X\bigcup Y\bigcup Z$, let $W_e=\{w\in W: v\in w\}$. Then for any $e\in E$,
$$\capa(e)=\begin{cases}
                       |W_e|-1& \textrm{if }e=E' \\
                       1& \textrm{otherwise}
                  \end{cases}.$$
We define $l$ supply vectors $\dem_1=\cdots=\dem_k,\dem_{k+1}=\cdots=\dem_l$ such that for any $v\in V$,
$$\dem_1(v)=\begin{cases}
                       -1& \textrm{if }v\in \{t_X,t_Y,t_Z\}\\
                       0& \textrm{otherwise}
                  \end{cases}$$
$$\dem_{k+1}(v)=\begin{cases}
                       -1& \textrm{if }v\in \{t'_X,t'_Y,t'_Z\}\\
                       0& \textrm{otherwise}
                  \end{cases}$$                  

The rest of the proof is devoted to showing that $W$ has a perfect matching if and only if the \uprob~instance satisfies $\lambda(G;\dem_1,\cdots,\dem_l)\ge 1$, which will lead to NP-completeness of our decision problem. The proof consists of two parts.

\textbf{Part 1}: a perfect matching in $W$ implies $\lambda(G;\dem_1,\cdots,\dem_l)\ge 1$.

Without loss of generality, suppose $\{w_1,\cdots,w_k\}\subset W$ is a perfect matching. For any $1\le i\le k$, define flow $\flow_i$ such that for any edge $e\in E$,
$$\flow_i(e)=\begin{cases}
                       1& \textrm{if } e \textrm{ is incident to }w_i \textrm{, or } e=\langle t,u\rangle  \textrm{ with }t\in\{t_X,t_Y,t_Z\}\textrm{ and } u\in w_i\\
                       0& \textrm{otherwise}
                  \end{cases}.$$   
For any $k+1\le j\le l$, define flow $\flow_j$ such that for any edge $e\in E$,
$$\flow_j(e)=\begin{cases}
                       1& \textrm{if } e \textrm{ is incident to }w_j \textrm{, or } e=\langle t,u\rangle  \textrm{ with }t\in\{t'_X,t'_Y,t'_Z\}\textrm{ and } u\in w_j\\
                       0& \textrm{otherwise}
                  \end{cases}.$$   

It is straightforward to check that the multi-commodity flow $\flow_1,\cdots,\flow_l$ is valid and satisfies the demand vectors $\dem_1\circ w_1,\cdots,\dem_l\circ w_l$. Hence, $\lambda(G;\dem_1,\cdots,\dem_l)\ge 1$.               

\textbf{Part 2}:  $\lambda(G;\dem_1,\cdots,\dem_l)\ge 1$ implies a perfect matching in $W$.

Suppose the optimum targets are $v_1,\cdots,v_l$, and the multi-commodity flow $\flow_1,\cdots,\flow_l$ is valid and satisfies $\dem_1\circ v_1,\cdots,\dem_l\circ v_l$. Part 2 immediately follows from the two facts:

               
\begin{enumerate}
\item Fact 1: $v_i\in W$ for any $1\le i\le l$. 

Consider the congestion of any $\flow_i$ on $Cut(W)=Cut(V(H_X))\bigcup Cut(V(H_Y))\bigcup Cut(V(H_Z))$. Let's proceed case by case.
\begin{itemize}
\item $v_i\in W$. Applying Lemma \ref{le:cutflow} to $\flow_i,\dem_i\circ v_i$, we see that the congestion of $\flow_i$ on $Cut(W)$ is at least 3.
\item $v_i\notin W$. Without loss of generality, assume $v_i\in V(H_X)$. Applying Lemma \ref{le:cutflow} to $\flow_i,\dem_i\circ v_i$, we see that the congestion of $\flow_i$ on $Cut(V(H_X))$ is at least 2, and those on $Cut(V(H_Y))$ and $Cut(V(H_Z))$ are both at least 1. Hence,  the congestion of $\flow_i$ on $Cut(W)$ is at least 4.
\end{itemize}
Since the total capacity of $Cut(W)$ is $3l$ which upper-bounds the total congestion, we get Fact 1.
%
%
\item Fact 2: the sets $v_1,\cdots,v_k$ are pairwise disjoint. 

For contradiction, suppose without loss of generality that $v_1=w_1,v_2=w_2,x_1\in w_1\bigcap w_2$. Applying Lemma \ref{le:cutflow} to multi-commodity flow $\flow_i,1\le i\le l$ and command vectors $\dem_i\circ v_i,1\le i\le l$, we have $\sum_{e\in Cut(W)} \flow_i(e)=3l$. This implies that $\flow_i(e)=1$ for any $e\in Cut(W)$. Namely, each edge in $Cut(W)$ is full of upward flow. Likewise, each edge in $Cut(T_X)$ is also full of upward flow. Let $e=\langle t_X,x_1\rangle$. Then we have $f_1(e')=f_2(e')=0$ for any edge $e'\neq e$ in $H_X$, since flow along such an edge can't reach $w_1$ or $w_2$. This, together with the precondition that $\flow_1$ satisfies $\dem_1\circ v_1$, implies $\flow_1(e)=1$. Likewise, $\flow_2(e)=1$. A contradiction is reached since $\capa(e)=1$.
\end{enumerate}
\end{proof}

\subsection{\uprob~on Trees}
Theorem \ref{thm:UndiNPC} indicates that \uprob~is hard to solve on general graphs, but does not exclude the possibility of an efficient algorithm solving \uprob~for some important special case. Indeed, \uprob~on trees allows a fast algorithm, as presented in Algorithm \ref{alg:tree}. Actually, networks with tree structure is the also the center of related literature \cite{SakashitaMinimum, Tansel1983Location,Chekuri2013The,MamadaOptimal2002,MamadaAn2006,Ito2009The,Andreev2009Simultaneous}.

Without loss of generality, trees will be arbitrarily rooted, so the concepts of ancestors, descendants, and subtrees are well defined as usual. Given vertices $u,v$ of a tree, we write $u\prec v$ if $u$ is a descendant of $v$, and $u\preceq v$ if $u\prec v$ or $u=v$.

Let's begin with a polynomial-time algorithm, which turns out to exactly solve \uprob~on trees.

\vspace{-0.4cm}
\begin{algorithm}[H]
 \textbf{Input:} a capacitated tree $G=(V,E,\capa)$, supply vectors $\dem_1,\cdots,\dem_k$\\
 \textbf{Output:} $v_i\in V,1\le i\le k$
 \begin{algorithmic}[1]
\For {each $1\le i\le k$}
        \State Let $v_i$ be the lowest common ancestor of the sources of $\dem_i$
        \While{there is a child $u$ of $v_i$ such that $\sum_{v\not\preceq u}|\dem_i(v)|<\sum_{v\preceq u}|\dem_i(v)|$} \label{line:whilecondition}
                \State Let $v_i$ be $u$
        \EndWhile
\EndFor     
\State Output($v_1,\cdots,v_k$) 
 \caption{The algorithm for \uprob~on trees.} \label{alg:tree}
 \end{algorithmic}
 \end{algorithm}
\begin{theorem}\label{thm:unditree}
The output of Algorithm \ref{alg:tree} is an optimum solution to \uprob~on trees.
\end{theorem}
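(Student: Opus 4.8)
The plan is to show that Algorithm~\ref{alg:tree} computes, for each commodity independently, a target $v_i$ that is simultaneously optimal for \emph{every} commodity, so that the interaction between commodities through shared edge capacities never forces a suboptimal choice. The key structural fact I expect to exploit is that on a tree there is a unique path between any source and any target, so for a fixed supply vector $\dem_i$ and a fixed target $v$, the flow satisfying $\dem_i\circ v$ is \emph{completely determined}: the congestion that commodity $i$ induces on an edge $e=\langle u,\mathrm{parent}(u)\rangle$ is exactly $|\sum_{w\preceq u}(\dem_i\circ v)(w)|$, which by Lemma~\ref{le:cutflow} (applied to $U$ the subtree rooted at $u$) equals the net supply that must cross $e$. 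Thus I would first establish a clean closed form: for target $v$, the congestion of commodity $i$ on the edge above $u$ is $|\sum_{w\preceq u}\supp_i(w)|$ if $v\not\preceq u$, and $|\sum_{w\not\preceq u}\supp_i(w)|$ if $v\preceq u$.

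With this formula, I would reinterpret the while-loop condition on line~\ref{line:whilecondition}. Writing $S_u=\sum_{v\preceq u}|\dem_i(v)|$ for the total supply inside the subtree at $u$ and $\bar S_u$ for the supply outside, the algorithm moves the target into the child subtree $u$ precisely when $\bar S_u < S_u$, i.e.\ when more than half the supply lies inside. The central lemma I would prove is that \emph{this greedy descent produces, for each commodity $i$, a target that minimizes the congestion on every edge simultaneously} — not just the total congestion but edge-by-edge. Concretely, I claim the congestion of commodity $i$ on each edge $e$ is a \emph{minimized} function of the target location when $v_i$ is chosen as the algorithm chooses it, and that the minimizing target is the ``weighted median'' vertex of the supply $\supp_i$ on the tree. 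The argument is a standard median/majority exchange: moving the target across an edge $e$ toward the heavier side strictly decreases the congestion on $e$ and does not increase congestion on any other edge, because on a tree the two decisions decouple across $e$.

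The payoff step ties the per-commodity optima together. Since for each commodity the algorithm's target minimizes congestion on \emph{every} edge simultaneously, the resulting multi-commodity congestion $\sum_i |\flow_i(e)|$ on each edge $e$ is minimized termwise, hence minimized overall. Therefore the value $\lambda$ achievable by scaling — which is governed by $\min_e \capa(e)/(\text{congestion on }e)$ once flows are determined — is maximized by this choice: any alternative assignment of targets can only increase congestion on some edge, tightening the binding capacity constraint and lowering $\lambda$. Formally I would argue that for \emph{any} target assignment $v_1',\dots,v_k'$ and any edge $e$, the congestion under the algorithm's output is at most the congestion under $v_1',\dots,v_k'$, so the feasible scaling factor $\lambda$ is at least as large; this is exactly optimality.

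The main obstacle I anticipate is justifying the ``minimizes congestion on every edge simultaneously'' claim rigorously, because it asserts that a single target location is optimal for all edges at once, which could fail if the median were ambiguous. The delicate case is the tie $\bar S_u = S_u$, where moving the target does not change the congestion on $e$; here I must check that the loop's strict inequality terminates at a vertex where no further move helps and that either endpoint of such a tied edge yields the same edge-wise congestion profile, so the choice is still globally optimal. I would also need to confirm termination and polynomial running time, but that is routine: each iteration of the while-loop strictly descends the tree, so the loop runs at most the depth of the tree, and the whole algorithm is clearly polynomial.
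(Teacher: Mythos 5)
Your proposal is correct, and its top-level skeleton matches the paper's: establish that for each commodity separately the algorithm's target induces congestion on every edge that is no larger than the congestion induced by any other target, and then observe that this edge-wise domination composes across commodities to preserve any feasible scaling factor $\lambda$. Where you differ is in how the per-commodity fact is proved. The paper runs an inductive exchange argument on $dist(v_i,w)$: given an arbitrary flow satisfying $\lambda\dem_i\circ w$, it constructs, one hop at a time, a flow $\flow'\lesssim\flow$ satisfying $\lambda\dem_i\circ v_i$, with a three-case analysis depending on the relative position of $w$ and $v_i$. You instead exploit the fact that on a tree the flow satisfying a demand vector is \emph{unique} (each subtree cut is a single edge, so Lemma \ref{le:cutflow} pins down every coordinate), derive the closed form $\min$-vs.-$\max$ congestion formula, and identify the algorithm's output as a weighted median that minimizes the congestion on every edge simultaneously because the target's side of distinct edges decouples. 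Your route is more transparent about \emph{why} the algorithm works (it computes a weighted $1$-median) and avoids the case analysis; the paper's exchange-style argument avoids having to state the uniqueness of tree flows and is the pattern reused later for the directed results. The one point you should not leave implicit is the verification that the vertex where the while-loop halts really is a median in the unrooted sense: the termination condition only bounds the weight of the \emph{child} subtrees by half, and you also need the invariant that the ``upward'' component has weight at most half — this holds because the descent starts at the lowest common ancestor of the sources (upward weight $0$) and each move into a child $u$ with $S_u>\bar S_u$ leaves upward weight $\bar S_u<\tfrac12\sum_v|\supp_i(v)|$. With that invariant spelled out, together with your treatment of the tied case, the argument is complete.
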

\begin{proof}
Given a capacitated tree $G=(V,E,\capa)$ and supply vectors $\dem_1,\cdots,\dem_k\in \nreal^V$, let $v_1,\cdots,v_k$ be the output of Algorithm \ref{alg:tree}. Orient any edge of $G$ upward, i.e., from a vertex to its parent. The theorem is proven in two steps.

\textbf{Step 1: }Arbitrarily fix $1\le i\le k$. We claim that for any $w\in V$, any $\lambda>0$, and any flow $\flow$ satisfying $\lambda\dem_i\circ w$, there is a flow $\flow'\lesssim \flow$ which satisfies $\lambda\dem_i\circ v_i$.

The claim is proved by induction on the hop distance (i.e., the number of edges) between $v_i$ and $w$, denoted by $dist(v_i,w)$.

\textbf{Basis}: The claim trivially holds when $dist(v_i,w)=0$.

\textbf{Hypothesis}: The claim holds when $dist(v_i,w)< \delta$.

\textbf{Induction:} $dist(v_i,w)=\delta>0$. Let $x$ be the lowest common ancestor of the sources of $\dem_i$. We proceed case by case.

\textbf{Case 1: $v_i\prec w$.} 

If $x\prec w$, set flow $\flow'$ such that for any edge $e\in E$, 
$$\flow'(e)=\begin{cases}
                       \flow(e)& \textrm{if } e \textrm{ lies in the subtree rooted at }x\\
                       0& \textrm{otherwise}
                  \end{cases}.$$   
One can easily check that $\flow'\lesssim \flow$ and $\flow'$ satisfies $\lambda\dem_i\circ x$. Let $w'=x$.

If $w\preceq x$, it must happen that $v_i=w$ at the beginning of some ``while loop"  of Algorithm \ref{alg:tree} when handling $\dem_i$. That loop must assign $u$ to $v_i$, where $u$ is the child of $w$ satisfying the condition in Line \ref{line:whilecondition}. Note that $u$ lies on the path between $w$ and the final $v_i$. Set flow $\flow'$ such that for any edge $e\in E$, 
$$\flow'(e)=\begin{cases}
                       \lambda \sum_{v\not\preceq u}\dem_i(v)& \textrm{if } e=\langle u,w\rangle\\
                       \flow(e)& \textrm{otherwise}
                  \end{cases}.$$   
By Lemma \ref{le:cutflow}, we see that $\flow(\langle u,w\rangle)=\lambda\sum_{v\preceq u}|\dem_i(v)|$. Then the condition in Line \ref{line:whilecondition} implies $\flow'\lesssim \flow$. Furthermore, one can check that $\flow'$ satisfies $\lambda\dem_i\circ u$. Let $w'=u$.


\textbf{Case 2: $w\prec v_i$.} Let $y$ be the child of $v_i$ such that $w\preceq y$. Let $E_{w,v_i}$ be the edges on the path between $w$ and $v_i$. Define flow $\flow'$ such that for any edge $e\in E$, 
$$\flow'(e)=\begin{cases}
                       \lambda \sum_{v\preceq u}|\dem_i(v)|& \textrm{if } e=\langle u,u'\rangle\in E_{w,v_i}\textrm{ with }u\prec u'\\
                       \flow(e)& \textrm{otherwise}
                  \end{cases}.$$   
Since Algorithm \ref{alg:tree} outputs $v_i$ rather than $y$ for $\dem_i$, it must hold that 
\begin{eqnarray}\label{equa:conditioninline3}
\sum_{v\not\preceq y}|\dem_i(v)|\ge \sum_{v\preceq y}|\dem_i(v)|.
\end{eqnarray}
For any edge $e=\langle u,u'\rangle \in E_{w,v_i}$ with $u\prec u'$, we have 
$$\begin{array}{rll}
|\flow(e)|&=\lambda\sum_{v\not\preceq u}|\dem_i(v)|& \textrm{by Lemma \ref{le:cutflow}}\\
&\ge \lambda\sum_{v\not\preceq y}|\dem_i(v)|& \textrm{by }u\preceq y\\
&\ge \lambda\sum_{v\preceq y}|\dem_i(v)|& \textrm{by Inequality \eqref{equa:conditioninline3}}\\
&\ge \lambda\sum_{v\preceq u}|\dem_i(v)|& \textrm{by }u\preceq y\\
&=|\flow'(e)|.&
\end{array}$$

Hence, $\flow'\lesssim \flow$. One can also check that $\flow'$ satisfies $\lambda\dem_i\circ v_i$. Let $w'=v_i$.


\textbf{Case 3: neither $w\preceq v_i$ nor $v_i\preceq w$.} Let $y$ be the lowest common ancestor of $w$ and $v_i$. We have either $x\prec y$ or $y\prec x$.

If $x\prec y$, define flow $\flow'$ such that for any edge $e\in E$, 
$$\flow'(e)=\begin{cases}
                       \flow(e)& \textrm{if } e \textrm{ lies in the subtree rooted at }x\\
                       0& \textrm{otherwise}
                  \end{cases}.$$   
Then $\flow'\lesssim \flow$ and $\flow'$ satisfies $\lambda\dem_i\circ v_i$. Let $w'=x$.

If $y\prec x$, $y$ lies on the path between $v_i$ and $x$. Hence, it must happen that $v_i=y$ at the beginning of some ``while loop"  of Algorithm \ref{alg:tree} when handling $\dem_i$. 
Then that loop does not choose the subtree of $y$ containing $w$. Follow the argument of Case 2, there is a flow $\flow'\lesssim \flow$ which satisfies the demand vector $\lambda\dem_i\circ y$. Let $w'=y$.

Altogether, we always have a flow $\flow'\lesssim \flow$ which satisfies the demand vector $\lambda\dem_i\circ w'$. Because $dist(v_i,w')<dist(v_i,w)=\delta$, we apply the induction hypothesis and finish step 1.

\textbf{Step 2: }Let $\lambda^*=\lambda(G;\dem_1,\cdots,\dem_k)$. Choose $w_1,\cdots,w_k\in V$ such that there is a valid multi-commodity flow $\flow_1,\cdots,\flow_k$ satisfying $\lambda^*\dem_1\circ w_1,\cdots,\lambda^*\dem_k\circ w_k$. For any $1\le i\le k$, apply the claim in step 1 to $w_i$ and $\flow_i$, resulting in a flow $\flow'_i\lesssim \flow_i$ which satisfies $\lambda^*\dem_i\circ v_i$. Therefore, we get a valid multi-commodity flow $\flow'_1,\cdots,\flow'_k$ satisfying $\lambda^*\dem_1\circ v_1,\cdots,\lambda^*\dem_k\circ v_k$. This means that the output of Algorithm \ref{alg:tree} is an optimum solution to \uprob.
\end{proof}

\subsection{Approximation Algorithm on General Graphs}

Theorem \ref{thm:unditree} suggests that \uprob~is not extremely intractable, at least in a special case. Fortunately, the tractability can be extended to more general graphs, in the sense of approximation. Let's begin with a lemma, which shows the important role of \emph{master sources} (defined below) in approximating \uprob. 

Arbitrarily fix a supply vector $\dem$ on a capacitated graph $G=(V,E,\capa)$. Arbitrarily choose $\theta\ge |S|>1$, where $S$ is the set of sources of $\dem$. Let $w$ be a master source of $\dem$, namely $w=\argmax_{v\in V}|\dem(v)|$. 
\begin{lemma}\label{le:mainsourcegood}
For any $u\in V$ and flow $\flow$ satisfying $\dem\circ u$, there is a flow $\flow'\lesssim \flow$ which satisfies $\frac{1}{\theta-1}\dem\circ w$.
\end{lemma}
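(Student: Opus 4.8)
The goal is to show that any flow $\flow$ satisfying $\dem\circ u$ can be transformed into a flow $\flow'\lesssim \flow$ satisfying the scaled-down demand $\frac{1}{\theta-1}\dem\circ w$, where $w$ is a master source. The plan is to first invoke Lemma \ref{le:decomposable} on the demand vector $\dem\circ u$, whose unique target is $u$: this decomposes $\flow$ into a canonical multi-commodity flow $\{\flow_v\}$ indexed by the sources $v\in S$ of $\dem\circ u$, where each $\flow_v$ routes exactly $|\dem(v)|$ units from $v$ to $u$ and the decomposition is congestion-preserving (so $\sum_v |\flow_v(e)| = |\flow(e)|$ on every edge $e$). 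This isolates the contribution of each source so that I can reason about individual source-to-$u$ paths.

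Next I would construct $\flow'$ by suitably scaling and reorienting these single-source components so as to route flow toward the master source $w$ instead of toward $u$. The natural idea is this: the demand $\frac{1}{\theta-1}\dem\circ w$ requires each source $v\neq w$ to send $\frac{1}{\theta-1}|\dem(v)|$ units to $w$. I would try to realize the $v$-to-$w$ route by concatenating the component $\flow_v$ (which goes $v\to u$) with a reversal of the component $\flow_w$ (which goes $w\to u$, so its reversal goes $u\to w$), each scaled by $\frac{1}{\theta-1}$. Formally I would set
$$\flow'(e)=\frac{1}{\theta-1}\sum_{v\in S\setminus\{w\}}\bigl(\flow_v(e)-\flow_w(e)\bigr),$$
and then check that $\flow'$ satisfies $\frac{1}{\theta-1}\dem\circ w$: at each source $v\neq w$ the term $\flow_v$ contributes the required net $\frac{1}{\theta-1}|\dem(v)|$ outflow, the target $u$ is a pass-through vertex (the $v\to u$ inflow is cancelled by the $u\to w$ outflow of the reversed $\flow_w$ pieces), and at $w$ all the reversed copies accumulate to absorb $\frac{1}{\theta-1}\sum_{v\neq w}|\dem(v)|$ units, which is exactly $(\dem\circ w)(w)$ scaled by $\frac{1}{\theta-1}$.

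The main obstacle, and the step I expect to require the most care, is verifying the congestion bound $\flow'\lesssim \flow$, i.e.\ $|\flow'(e)|\le |\flow(e)|$ on every edge. By the triangle inequality,
$$|\flow'(e)|\le \frac{1}{\theta-1}\sum_{v\in S\setminus\{w\}}\bigl(|\flow_v(e)|+|\flow_w(e)|\bigr).$$
The sum over $v\neq w$ of $|\flow_v(e)|$ is at most $|\flow(e)|$ (it omits the $v=w$ term from the congestion-preserving decomposition), while the copies of $|\flow_w(e)|$ number $|S|-1\le \theta-1$, so that block contributes at most $\frac{|S|-1}{\theta-1}|\flow_w(e)|\le |\flow_w(e)|$. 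Thus the whole expression is bounded by $\frac{1}{\theta-1}|\flow(e)|+|\flow_w(e)|$, which is \emph{not} obviously at most $|\flow(e)|$, so the crude triangle-inequality bound is too lossy. The delicate point is therefore to exploit cancellation between the $\flow_v$ pieces and the reversed $\flow_w$ pieces on each edge, using the choice of $w$ as the \emph{maximum}-magnitude source (so $|\dem(v)|\le|\dem(w)|$ for all $v$) together with the fact that $|S|\le\theta$. I anticipate the clean argument works edge-by-edge by partitioning $S\setminus\{w\}$ according to the direction in which each $\flow_v$ traverses $e$ and bounding the at most $|S|-1\le\theta-1$ superimposed unit-strength contributions against the single factor $|\flow(e)|$; making this cancellation precise is where the real work lies, and is what forces the approximation factor $\frac{1}{\theta-1}$ rather than $1$.
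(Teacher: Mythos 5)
Your overall strategy---decompose $\flow$ via Lemma \ref{le:decomposable} into per-source components $\flow_v$, then reroute each source to $w$ by concatenating $\flow_v$ with a reversal of $\flow_w$, all scaled by $\frac{1}{\theta-1}$---is exactly the paper's approach. But your concrete construction has a genuine flaw: with
$\flow'(e)=\frac{1}{\theta-1}\sum_{v\in S\setminus\{w\}}\bigl(\flow_v(e)-\flow_w(e)\bigr)$
you subtract one \emph{full} copy of $\flow_w$ per source, so the reversed pieces deliver $\frac{|S|-1}{\theta-1}|\dem(w)|$ units into $w$ and drain $\frac{|S|-1}{\theta-1}|\dem(w)|$ units out of $u$, whereas conservation requires $\frac{1}{\theta-1}\sum_{v\neq w}|\dem(v)|$ at both vertices. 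These quantities coincide only when all sources have equal demand, so in general $\flow'$ does not satisfy $\frac{1}{\theta-1}\dem\circ w$: there is a nonzero residual at $u$ and the wrong amount at $w$. The paper's fix is to weight each reversed copy by $\frac{\dem(v)}{\dem(w)}$, i.e.\ $\flow'_v(e)=\frac{1}{\theta-1}\bigl(\flow_v(e)-\flow_w(e)\frac{\dem(v)}{\dem(w)}\bigr)$; linearity of the net-inflow functional then gives conservation exactly.

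You have also misplaced where the difficulty lies. With the corrected weights, the congestion bound needs no cancellation argument at all: the plain triangle inequality gives
$|\flow'(e)|\le\frac{1}{\theta-1}\sum_{v\neq w}|\flow_v(e)|+\frac{|\flow_w(e)|}{\theta-1}\sum_{v\neq w}\frac{\dem(v)}{\dem(w)}$,
and since $w$ is a master source each ratio $\frac{\dem(v)}{\dem(w)}$ is at most $1$, so the second block is at most $|\flow_w(e)|$; the first block is at most $\sum_{v\neq w}|\flow_v(e)|$ because $\theta-1\ge 1$, and the congestion-preserving decomposition gives $\sum_{v\neq w}|\flow_v(e)|+|\flow_w(e)|=|\flow(e)|$. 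Your claim that the triangle inequality is ``too lossy'' comes from bounding $\frac{1}{\theta-1}\sum_{v\neq w}|\flow_v(e)|$ by $\frac{1}{\theta-1}|\flow(e)|$ instead of by $|\flow(e)|-|\flow_w(e)|$; the tighter (and still elementary) estimate closes the gap. This is precisely where the master-source property and the bound $|S|\le\theta$ enter---in making the weighted construction both conservative and capacity-respecting---not in any directional case analysis on edges. Finally, you should treat separately the cases $u=w$ (trivial) and $u\in S\setminus\{w\}$, where $u$ is no longer a source of $\dem\circ u$ and the decomposition has no component $\flow_u$; the paper handles this by setting $\flow_u$ and $\dem_u$ to zero.
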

\begin{proof}  
We proceed case by case.

\textbf{Case 1: $u\notin S$.} For any $s\in S$, define demand vector $\dem_s$ such that for any $v\in V$,
$$\dem_s(v)=\begin{cases}
                      \dem(s)& \textrm{if } v=s\\
                      -\dem(s)& \textrm{if } v=u\\
                       0& \textrm{otherwise}
                  \end{cases}.$$   
By Lemma \ref{le:decomposable}, $\flow$ has a decomposition $\{\flow_s: s\in S\}$ satisfying $\{\dem_s: s\in S\}$.

Now for any $s\in S\setminus\{w\}$, define flow $\flow'_s$ such that for any $e\in E$, $$\flow'_s(e)=\frac{1}{\theta-1}\left(\flow_s(e)-\flow_w(e)\frac{\dem(s)}{\dem(w)}\right),$$ and demand vector $\dem'_s$ such that for any $v\in V$,
$$\dem'_s(v)=\begin{cases}
                      \dem(s)& \textrm{if } v=s\\
                      -\dem(s)& \textrm{if } v=w\\
                       0& \textrm{otherwise}
                  \end{cases}.$$

Our task is reduced to establishing three claims.

\textbf{Claim 1: }for any $s\in S\setminus\{w\}$, $\flow'_s$ satisfies $\frac{1}{\theta-1}\dem'_s$.

It suffices to show $\phi(v,\flow'_s)=\frac{1}{\theta-1}\dem'_s(v)$ for any $v\in V$, where $\phi(x,\vec{g})=\sum_{e\in E_-(x)}\vec{g}(e)-\sum_{e\in E_+(x)}\vec{g}(e)$, which is the net incoming of flow $\vec{g}$ at vertex $x$. Obviously, $\phi(x,\vec{g})$ is linear in $\vec{g}$.

Arbitrarily fix $v\in V$. By definition of $\flow'_s$, 
$$\begin{array}{rcl}
\phi(v,\flow'_s)&=&\frac{1}{\theta-1}\phi\left(v,\flow_s-\flow_w\frac{\dem(s)}{\dem(w)}\right)\\
&=&\frac{1}{\theta-1}\left(\phi(v,\flow_s)-\frac{\dem(s)}{\dem(w)}\phi(v,\flow_w)\right)\\
&=&\frac{1}{\theta-1}\left(\dem_s(v)-\frac{\dem(s)}{\dem(w)}\dem_w(v)\right) \qquad(\textrm{since }\flow_s,\flow_w\textrm{ satisfy }\dem_s,\dem_w)\\
&=&\frac{1}{\theta-1}\dem'_s(v) \qquad(\textrm{by definition of }\dem_s,\dem_w,\dem'_s)
\end{array}$$

\textbf{Claim 2: }$\flow'=\sum_{s\in S\setminus\{w\}}\flow'_s$ satisfies $\frac{1}{\theta-1}\dem\circ w$. It immediately follows from Claim 1.

\textbf{Claim 3: }$\flow'\lesssim \flow$.

It holds because for any $e\in E$,
$$\begin{array}{rcl}
|\flow'(e)|&=&|\sum_{s\in S\setminus\{w\}}\flow'_s(e)|\\
&\le&\sum_{s\in S\setminus\{w\}}|\flow'_s(e)|\\
&=&\sum_{s\in S\setminus\{w\}}\frac{1}{\theta-1}\left|\flow_s(e)-\flow_w(e)\frac{\dem(s)}{\dem(w)}\right|\\
&\le&\sum_{s\in S\setminus\{w\}}\frac{|\flow_s(e)|}{\theta-1}+\frac{|\flow_w(e)|}{\theta-1}\sum_{s\in S\setminus\{w\}}\frac{\dem(s)}{\dem(w)}\\
&\le&\sum_{s\in S\setminus\{w\}}|\flow_s(e)|+|\flow_w(e)|\\
&=&|\flow(e)|
\end{array}$$

The proof of Case 1 finishes.

\textbf{Case 2: $u = w\in S$.} The lemma trivially holds.

\textbf{Case 3: $u \in S\setminus \{w\}$.} 

The proof of Case 1 almost works, except that $\dem_u$ is not well-defined and the decomposition of $\flow$ does not include $\flow_u$. As a result, we still apply the proof of Case 1, after defining $\flow_u\in \real^E$ and $\dem_u\in \real^V$ to be all-zero vectors.
\end{proof} 
\begin{remark}\label{rem:etavalid}
Lemma \ref{le:mainsourcegood} remains true if $\theta$ is replaced by $\eta\ge \frac{\sum_{v\in V}\dem(v)}{\dem(w)}$.
\end{remark}

Algorithm \ref{alg:undigraph} is a simple algorithm for \uprob~with guaranteed approximation ratio. 
\begin{algorithm}[H]
 \textbf{Input:} a capacitated graph $G=(V,E,\capa)$, supply vectors $\dem_1,\cdots,\dem_k$\\
 \textbf{Output:} $w_i\in V,1\le i\le k$
 \begin{algorithmic}[1]
 \For {each $1\le i\le k$}
        \State Output $w_i=\argmax_{v\in V}|\dem_i(v)|$ as the target of $\dem_i$
 \EndFor  
\caption{An approximation algorithm for \uprob.} \label{alg:undigraph}
 \end{algorithmic}
 \end{algorithm}
  
\begin{theorem}\label{thm:undigraphapp}
Algorithm \ref{alg:undigraph} is $\max\{\theta-1,1\}$-approximate, where $\theta=\max_{1\le i\le k}|\{v\in V: \dem_i(v)<0\}|$.
\end{theorem}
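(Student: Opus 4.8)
The plan is to derive the theorem as a direct consequence of Lemma \ref{le:mainsourcegood}, which already does the real work of rerouting a single commodity's flow from an arbitrary target to its master source at the price of a factor $\theta-1$. The overall strategy is: take an optimal solution, apply the lemma commodity-by-commodity to the optimal flow, and reassemble the resulting flows into one valid multi-commodity flow supported exactly on the targets that Algorithm \ref{alg:undigraph} outputs.

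Concretely, I would first set $\lambda^*=\lambda(G;\dem_1,\cdots,\dem_k)$ and fix optimal targets $u_1,\cdots,u_k$ together with a valid multi-commodity flow $\flow_1,\cdots,\flow_k$ satisfying $\lambda^*\dem_i\circ u_i$ for each $i$; such objects exist by definition of the optimum (and, for checking validity, Lemma \ref{le:MuFinP}). Writing $S_i=\{v\in V:\dem_i(v)<0\}$ and $w_i=\argmax_{v\in V}|\dem_i(v)|$ for the target chosen by the algorithm, I argue per commodity. For a commodity with $|S_i|\ge 2$, I apply Lemma \ref{le:mainsourcegood} to the supply vector $\lambda^*\dem_i$ (whose source set is still $S_i$ and whose master source is still $w_i$, since scaling by a positive constant preserves the $\argmax$), using the global parameter $\theta\ge |S_i|>1$. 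This yields a flow $\flow'_i\lesssim \flow_i$ satisfying $\frac{1}{\theta-1}(\lambda^*\dem_i)\circ w_i=\frac{\lambda^*}{\theta-1}\,\dem_i\circ w_i$, where the last equality uses that $\supp\circ v$ is linear in $\supp$.

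The uni-source commodities need only a quick separate remark. When $|S_i|\le 1$, the unique source is exactly $w_i$, so $\dem_i\circ w_i$ is the all-zero demand vector and is satisfied by $\flow'_i=\vec{0}\lesssim \flow_i$ at any scaling; such commodities impose no constraint. Putting the pieces together, since $\flow'_i\lesssim \flow_i$ for every $i$ and $\{\flow_i\}$ is valid, the combined congestion obeys $\sum_i|\flow'_i(e)|\le \sum_i|\flow_i(e)|\le \capa(e)$ for every edge $e$, so $\{\flow'_i\}$ is a valid multi-commodity flow satisfying $\frac{\lambda^*}{\theta-1}\dem_i\circ w_i$ for all $i$. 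Hence $\lambda(G;\dem_1\circ w_1,\cdots,\dem_k\circ w_k)\ge \frac{\lambda^*}{\theta-1}$, so the value returned by Algorithm \ref{alg:undigraph} is within a factor $\theta-1$ of $\lambda^*$.

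Finally I would reconcile this with the stated bound $\max\{\theta-1,1\}$ by a short case split on $\theta$. When $\theta\ge 2$ we have $\max\{\theta-1,1\}=\theta-1$, which is exactly the ratio established above. When $\theta=1$ every commodity is uni-source, every algorithmic target coincides with its source, and the argument above (with all demands identically zero) shows the algorithm matches the optimum, giving ratio $1=\max\{\theta-1,1\}$. I do not expect any serious obstacle, since the conceptual heavy lifting is entirely inside Lemma \ref{le:mainsourcegood}; the only points requiring care are (i) applying the lemma to the scaled supply $\lambda^*\dem_i$ rather than $\dem_i$, (ii) disposing of the harmless uni-source commodities, and (iii) confirming that summing the per-commodity domination $\flow'_i\lesssim \flow_i$ preserves validity of the whole multi-commodity flow.
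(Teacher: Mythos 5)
Your proposal is correct and follows essentially the same route as the paper's own proof: fix an optimal solution, apply Lemma \ref{le:mainsourcegood} with $\dem=\lambda^*\dem_i$, $w=w_i$, $u=u_i$ to each commodity, and observe that the dominated flows $\flow'_i\lesssim\flow_i$ reassemble into a valid multi-commodity flow achieving $\frac{\lambda^*}{\theta-1}$. Your extra care with the uni-source commodities (where the lemma's hypothesis $|S|>1$ fails) is a small tidiness improvement over the paper, which only treats the case $\theta=1$ globally, but it does not constitute a different argument.
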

\begin{proof}
Arbitrarily fix a capacitated graph $G=(V,E,\capa)$ and supply vectors $\dem_1,\cdots,\dem_k\in \nreal^V$ as input to Algorithm \ref{alg:undigraph}. Let $w_1,\cdots,w_k$ be the output. If $\theta=1$, each $w_i$ the unique source of $\dem_i$, which is trivially optimum. Hence, we assume $\theta>1$ and show that establish approximation ratio $\theta-1$.

Let $\lambda^*=\lambda(G;\dem_1,\cdots,\dem_k)$. Suppose $u_1,\cdots,u_k\in V$ is an optimum solution to \uprob. This means that there is a multi-commodity flow $\{\flow_1,\cdots,\flow_k\}$ satisfying $\{\lambda^*\dem_1\circ u_1,\cdots,\lambda^*\dem_k\circ u_k\}$. 

For any $1\le i\le k$, apply Lemma \ref{le:mainsourcegood} with $w=w_i,u=u_i,\flow=\flow_i,\dem=\lambda^*\dem_i$,  getting a flow $\flow'_i\lesssim \flow_i$ which satisfies $\frac{\lambda^*}{\theta-1}\dem_i\circ w_i$. As a result, we find a valid multi-commodity flow $\{\flow'_1,\cdots,\flow'_k\}$ satisfying $\{\frac{\lambda^*}{\theta-1}\dem_1\circ w_1,\cdots,\frac{\lambda^*}{\theta-1}\dem_i\circ w_i\}$, so $\lambda(G;\dem_1\circ w_1,\cdots,\dem_k\circ w_k)\ge \frac{\lambda^*}{\theta-1}$. The proof ends.
\end{proof}
\begin{remark}\label{rem:etavalidbound}
By applying Remark \ref{rem:etavalid} rather than Lemma \ref{le:mainsourcegood}, Theorem \ref{thm:undigraphapp} remains true if $\theta$ is replaced by $\eta=\max_{1\le i\le k}\frac{\sum_{v \in V}\dem_i(v)}{\dem_i(w_i)}$ which is not bigger than $\theta$. Hereunder, this $\eta$ will be called concentration of the supply vectors. It intuitively indicates how much demands are concentrated on sources.
\end{remark}

 Note that in Remark \ref{rem:etavalidbound}, $\eta\le 1$ if the $w_i$-entry \emph{dominates} $\dem_i$ for any $1\le i\le k$, namely $|\dem_i(w_i)|\ge \sum_{v\neq w_i}|\dem_i(v)|$.
A special such case is when every supply vector has no more than 2 sources. Then by Remark \ref{rem:etavalidbound}, we immediately have the following corollary.
\begin{corollary}\label{cor:optimumfor2sources}
When every supply vector has a dominant entry, Algorithm \ref{alg:undigraph} exactly solves \uprob.
\end{corollary}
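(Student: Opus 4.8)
The plan is to obtain the corollary as an essentially one-line consequence of Remark~\ref{rem:etavalidbound}, which already strengthens Theorem~\ref{thm:undigraphapp} to assert that Algorithm~\ref{alg:undigraph} is $\max\{\eta-1,1\}$-approximate, where $\eta=\max_{1\le i\le k}\frac{\sum_{v\in V}\dem_i(v)}{\dem_i(w_i)}$ and $w_i=\argmax_{v\in V}|\dem_i(v)|$ is precisely the target output by the algorithm. Since a $1$-approximation algorithm returns an optimum by definition, it suffices to verify that the dominance hypothesis forces $\max\{\eta-1,1\}=1$, i.e. that $\eta-1\le 1$.

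First I would put the per-commodity quantity $\eta_i:=\frac{\sum_{v\in V}\dem_i(v)}{\dem_i(w_i)}$ into a transparent form. Because $\dem_i\in\nreal^V$ has only non-positive entries, writing $a_i=|\dem_i(w_i)|$ and $B_i=\sum_{v\neq w_i}|\dem_i(v)|$, the signs in the numerator and denominator cancel, giving $\eta_i=\frac{a_i+B_i}{a_i}$ and hence $\eta_i-1=\frac{B_i}{a_i}$. We may assume each $\dem_i$ has at least one source, so $a_i>0$; a commodity whose only source is $w_i$ has $B_i=0$, whence $\eta_i=1$, so it never constrains the maximum.

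Next I would apply the dominance assumption. By hypothesis the $w_i$-entry dominates $\dem_i$, which is exactly the inequality $a_i=|\dem_i(w_i)|\ge\sum_{v\neq w_i}|\dem_i(v)|=B_i$; therefore $\eta_i-1=\frac{B_i}{a_i}\le 1$ for every $i$. Taking the maximum over $i$ yields $\eta-1\le 1$, so $\max\{\eta-1,1\}=1$. Plugging this into Remark~\ref{rem:etavalidbound} shows Algorithm~\ref{alg:undigraph} is $1$-approximate, i.e. its output $w_1,\dots,w_k$ is an optimum solution of \uprob, as claimed.

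I expect no genuine obstacle beyond this arithmetic: the crux is simply recognizing that dominance \emph{is} the statement $B_i\le a_i$, equivalently $\eta_i-1\le 1$, and that the floor $\max\{\cdot,1\}$ in the approximation ratio then collapses to $1$ exactly in this regime. The only point deserving a word of care is that Lemma~\ref{le:mainsourcegood}, the engine behind Remark~\ref{rem:etavalidbound}, presupposes a commodity has more than one source; single-source commodities fall outside its scope, but they contribute $\eta_i=1$ and are already dispatched by the trivial $\theta=1$ branch in the proof of Theorem~\ref{thm:undigraphapp}, so they affect neither the bound on $\eta$ nor the final conclusion.
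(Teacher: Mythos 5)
Your proposal is correct and is essentially the paper's own argument: the paper derives the corollary as an immediate consequence of Remark~\ref{rem:etavalidbound}, observing that dominance of the $w_i$-entry forces the approximation ratio $\max\{\eta-1,1\}$ to collapse to $1$. Your spelled-out arithmetic ($\eta_i-1=B_i/a_i\le 1$ under dominance) in fact cleans up a small slip in the paper's prose, which asserts ``$\eta\le 1$'' where it means $\eta-1\le 1$, and your remark about single-source commodities matches the trivial $\theta=1$ branch already handled in Theorem~\ref{thm:undigraphapp}.
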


\section{Hardness and Algorithms of \dprob}\label{sec:directedgraphs}
In this section, we adapt \uprob~to networks modeled as directed graphs. Such networks have also been studied in the network flow community and frequently appear in nowadays practice. For example, only down-streaming traffics are allowed by many data servers. 

We adopt the notation and concepts in Section \ref{sec:undigraphs} in case of no ambiguity, with three exceptions:
\begin{itemize}
\item Every edge has an inherent direction and is called an arc. An arc from vertex $u$  to vertex $v$ is denoted by $(u,v)$. We usually use $G=(V,A,\capa)$ to represent a capacitated directed $G$ with vertex set $V$, arc set $A$, and capacity vector $\capa\in \preal^A$. Accordingly,  $A_-(v)$ ($A_+(v)$, respectively) stands for the set of incoming (outgoing, respectively) arcs at vertex $v$. Likewise, define $A_-(U)$ and $A_+(U)$ for vertex subset $U\subseteq V$.
\item Any arc only allows a flow in the inherent direction, so we can naturally specify a network flow using a \emph{non-negative} vector $\flow\in \preal^A$.
\item We continue to study the problem of target location for maximizing concurrent multi-commodity flow, but in the context of directed graphs. The problem will be called \dprob~to highlight the directed model.
\end{itemize}

Note that Lemmas \ref{le:MuFinP}-\ref{le:cutflow} still hold in the context of the directed graph model. 
The following theorem indicates the strong relation between \uprob~and \dprob.
\begin{theorem}\label{thm:undi2di}
\uprob~is reducible to \dprob. 
\end{theorem}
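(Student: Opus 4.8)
The plan is to give a polynomial-time construction that turns any undirected instance $(G,\supp_1,\dots,\supp_k)$ of \uprob~into a directed instance $(G',\supp'_1,\dots,\supp'_k)$ of \dprob~with the same optimal value, so that a feasibility or optimality answer for the latter yields one for the former. The heart of the construction is a local gadget that replaces a single undirected edge by directed arcs in a way that faithfully reproduces the undirected congestion constraint $\sum_{\flow\in F}|\flow(e)|\le\capa(e)$. Concretely, for each undirected edge $e=\langle u,v\rangle$ I would introduce two fresh vertices $p_e,q_e$ and the arcs $(u,p_e),(v,p_e),(p_e,q_e),(q_e,u),(q_e,v)$, each with capacity $\capa(e)$. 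All original vertices are kept, and I set $\supp'_i(x)=\supp_i(x)$ on original vertices and $\supp'_i(x)=0$ on the new gadget vertices. The point of the gadget is that the only arc joining the ``$u$-side'' of $e$ to its ``$v$-side'' is the single bottleneck $(p_e,q_e)$, so any unit that crosses $e$ in either direction is forced through $(p_e,q_e)$, and the total two-way traffic on $e$ is capped at $\capa(e)$, exactly as in the undirected model.

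I would then prove value-preservation in two directions. For the easy direction, given a valid undirected multi-commodity flow I route, on each edge $e$, the signed quantity $\flow_i(e)$ through the gadget along $u\to p_e\to q_e\to v$ (or the reverse) according to its sign; the congestion on $(p_e,q_e)$ becomes exactly $\sum_i|\flow_i(e)|\le\capa(e)$, and each access arc carries only the one-directional crossing total, which is likewise at most $\capa(e)$. Hence $\lambda(G';\cdots)\ge\lambda(G;\cdots)$. For the converse, given a valid directed flow in which every target is an original vertex, I define the undirected flow on $e$ to be the net amount of commodity $i$ crossing from the $u$-side to the $v$-side; conservation at $p_e,q_e$ (they are transshipment vertices for every commodity, by Lemma \ref{le:cutflow}) makes this well defined, the demands at original vertices are reproduced, and since $|\flow_i(e)|$ is at most commodity $i$'s gross use of the shared bottleneck $(p_e,q_e)$, summing over $i$ gives $\sum_i|\flow_i(e)|\le\capa(e)$. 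Thus $\lambda(G;\cdots)\ge\lambda(G';\cdots)$ and the two optima coincide.

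The main obstacle, and the step that needs the most care, is that a gadget vertex must not be allowed to serve as a target. Because a directed arc carries a single direction, the gadget decouples the two crossing directions of $e$ onto distinct arcs ($(u,p_e)$ versus $(q_e,u)$); consequently a commodity can deliver its demand to the midpoint $p_e$ from both sides while consuming only the access arcs and bypassing the bottleneck entirely. This lets a target placed at $p_e$ strictly beat every original-vertex target whenever a commodity has sources on both sides of a near-saturated cut, which is precisely the undirected/directed separation phenomenon emphasized elsewhere in the paper. I would therefore perform the reduction into the variant of \dprob~whose candidate targets are restricted to the original vertices (equivalently, forbidding the gadget vertices as targets); the backward translation above is exactly the part that invokes ``target at an original vertex.'' Legitimizing this restriction --- either by reducing to the restricted-candidate-target version of \dprob~or by augmenting the construction so that gadget vertices provably cannot be optimal targets (for instance by ensuring each commodity's total demand exceeds the combined incoming capacity of any gadget vertex) --- is where the real work lies, whereas the two flow translations are routine bookkeeping.
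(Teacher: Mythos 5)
Your construction is the paper's construction (your $p_e,q_e$ are its $s_e,t_e$), and your two flow translations coincide with its Steps~2 and~3 for the case where every target is an original vertex. The difference is in how the case of a target placed at a gadget vertex is treated, and here your proposal has a genuine gap: you do not prove the theorem as stated. Your first fix --- restricting candidate targets to $V$ --- establishes a reduction from \uprob~to the \emph{restricted-target} variant of \dprob, which is a different (and, per the paper's later results, strictly harder) problem than the \dprob~of Theorem~\ref{thm:undi2di}. Your second fix is left unexecuted, and the concrete suggestion you float (make each commodity's total demand exceed the in-capacity $2\capa(e)$ of every gadget midpoint) cannot work as stated: in \uprob/\dprob~the demands are scaled by the variable $\lambda$, so for small $\lambda$ the scaled demand drops below any fixed capacity threshold and the gadget vertex becomes admissible again. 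A proof must either exhibit a value-preserving relocation of a gadget target to an endpoint of $e$, or redesign the gadget so that no such relocation is needed; you explicitly defer exactly this step.

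For comparison, the paper attacks the deferred case head-on: when $v'_i\in\{s_e,t_e\}$ for $e=\langle u,v\rangle$ it reassigns the target to whichever endpoint receives the larger access flow ($v_i=u$ if $\flow'_i(u,s_e)>\flow'_i(v,s_e)$, else $v_i=v$) and lets only the smaller access flow cross $e$ in the reconstructed undirected flow, the point being that for a single commodity a target at $s_e$ must push the \emph{full} one-sided demand through each access arc of capacity $\capa(e)$, so it is never better than targeting the heavier endpoint. Your skepticism about this step is not misplaced, however: the relocation argument must also verify that the reconstructed congestion on $e$, summed over \emph{all} commodities, stays within $\capa(e)$, and the commodity targeted at $s_e$ can consume access-arc capacity without touching the bottleneck arc $(s_e,t_e)$, which is precisely the interaction your ``both sides of a near-saturated cut'' remark points at. So you have correctly located the crux of the proof; what is missing is any actual resolution of it, and until that case is closed the claimed equality of optima --- and hence the reduction to \dprob~rather than to its restricted-target variant --- is not established.
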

\begin{proof} 
Arbitrarily fix an capacitated graph $G=(V,E,\capa)$ and supply vectors $\dem_1,\cdots,\dem_k\in \real^V$. We will construct a capacitated direct graph $G'=(V',A,\capa')$ and supply vectors $\dem'_1,\cdots,\dem'_k\in \real^{V'}$, and prove that the construction preserves the \emph{quality} of solutions.

\textbf{Step 1:} Construct $G'$ and the supply vectors. 

The directed graph $G'$ is obtained by replacing any edge of $G$ with the \emph{diamond} gadget as illustrated in Figure \ref{fig:unditodi}. Specifically, $V'=V\bigcup\{s_e,t_e: e\in E\}$, $A=\{(u,s_e),(v,s_e),(s_e,t_e),(t_e,u),(t_e,v): e=\langle u,v\rangle\in E\}$, and for any arc $a$ in the diamond corresponding to edge $e$, $\capa'(a)=\capa(e)$. For any $1\le i\le k$,  define $\dem'_i$ such that for any $v\in V'$, 
$$\dem'_i(v)=\begin{cases}
                      \dem_i(v)& \textrm{if } v\in V\\
                       0& \textrm{otherwise}
                  \end{cases}.$$

\begin{figure}
  \centering
  \includegraphics[width=0.65\textwidth]{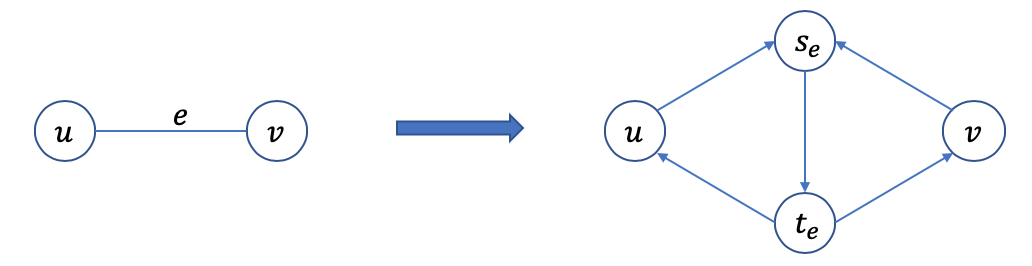}
\caption{The gadget for reducing \uprob~to \dprob.}\label{fig:unditodi}
\end{figure}   

\textbf{Step 2:} Prove that for any $v_1,\cdots,v_k\in V$, $\lambda(G;\dem_1\circ v_1,\cdots,\dem_k\circ v_k)\le \lambda(G';\dem'_1\circ v_1,\cdots,\dem'_k\circ v_k)$.

Consider any $\lambda$ and any valid multi-commodity flow $\{\flow_1,\cdots,\flow_k\}$ satisfying $\{\lambda\dem_1\circ v_1,\cdots,\lambda\dem_k\circ v_k\}$. For any $1\le i\le k$, define flow $\flow'_i$ as follows: for any $e=\langle u,v\rangle\in E$, if $\flow_i(e)$ is from $u$ to $v$, set $\flow'_i(u,s_e)=\flow'_i(s_e,t_e)=\flow'_i(t_e,v)=|\flow_i(e)|$, otherwise set $\flow'_i(v,s_e)=\flow'_i(s_e,t_e)=\flow'_i(t_e,u)=|\flow_i(e)|$; $\flow'_i(a)=0$ for any other arc $a$. It is straightforward to check that the multi-commodity flow $\{\flow'_1,\cdots,\flow'_k\}$ is valid and satisfies $\{\lambda\dem'_1\circ v_1,\cdots,\lambda\dem'_k\circ v_k\}$

\textbf{Step 3:} Prove that for any $v'_1,\cdots,v'_k\in V'$, there are $v_1,\cdots,v_k\in V$ such that $\lambda(G';\dem'_1\circ v'_1,\cdots,\dem'_k\circ v'_k)\le \lambda(G;\dem_1\circ v_1,\cdots,\dem_k\circ v_k)$.

Consider any $\lambda$ and any valid multi-commodity flow $\{\flow'_1,\cdots,\flow'_k\}$ satisfying $\{\lambda\dem'_1\circ v'_1,\cdots,\lambda\dem'_k\circ v'_k\}$. For any $1\le i\le k$, define flow $\flow_i$ as follows. For any $e=\langle u,v\rangle\in E$ oriented from $u$ to $v$, we deal case by case:
\begin{itemize}
\item When $v'_i\notin \{s_e,t_e\}$, set $\flow_i(e)=\flow'_i(u,s_e)-\flow'_i(v,s_e)$.
\item When $v'_i\in \{s_e,t_e\}$, set $\flow_i(e)=\flow'_i(u,s_e)$ if $\flow'_i(u,s_e)<\flow'_i(v,s_e)$, otherwise $\flow_i(e)=-\flow'_i(v,s_e)$.
\end{itemize}

Now for any $1\le i\le k$, we find a proper $v_i\in V$. This is also done case by case:
\begin{itemize}
\item When $v'_i\in V$, let $v_i=v'_i$.
\item When $v'_i\in \{s_e,t_e\}$ for $e=\langle u,v\rangle$, let $v_i=u$ if $\flow'_i(u,s_e)>\flow'_i(v,s_e)$, otherwise $v_i=v$.
\end{itemize}

Again, it is easy to check that the multi-commodity flow $\{\flow_1,\cdots,\flow_k\}$ is valid and satisfies $\{\lambda\dem_1\circ v_1,\cdots,\lambda\dem_k\circ v_k\}$.
\end{proof}
\begin{remark}\label{rem:DiNPC}
Theorem \ref{thm:undi2di} implies that \dprob~is at least as hard as \uprob. 
Together with Theorem \ref{thm:UndiNPC}, \dprob~is also NP-hard. 
More importantly, the reduction in the above proof preserves approximation ratio: any $\alpha$-approximation algorithm of \dprob, combined with the reduction, also $\alpha$-approximately solves \uprob.
\end{remark}
We further show that \dprob~has no PTAS.
\begin{theorem}\label{thm:DiLowerBound}
Unless P=NP, \dprob~cannot be efficiently approximated with a ratio smaller than 2.
\end{theorem}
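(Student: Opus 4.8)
The plan is to exhibit a gap-introducing reduction from 3-DM to \dprob. From any 3-DM instance $(X,Y,Z,W)$ I build, in polynomial time, a capacitated digraph $G$ together with supply vectors so that the optimum value $\lambda(G;\cdots)$ equals $1$ when $W$ admits a perfect matching and is at most $\frac{1}{2}$ when it does not. Given such a reduction, any polynomial-time algorithm with approximation ratio strictly smaller than $2$ would, on the produced instance, return a solution of value strictly greater than $\frac{1}{2}$ precisely in the perfect-matching case; it would therefore decide 3-DM in polynomial time, forcing P=NP. Note that the approximation-preserving reduction of Theorem~\ref{thm:undi2di} alone does not suffice, since Theorem~\ref{thm:UndiNPC} only gives NP-hardness of the decision version and no constant gap; the point of directedness is to manufacture the gap.

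For the construction I would start from the graph of Theorem~\ref{thm:UndiNPC}, orient every edge, and insert a unit-capacity directed bottleneck arc at each triple vertex, so that routing a commodity ``through'' a triple $w\in W$ consumes one unit of a capacity-$1$ arc dedicated to $w$. The supply vectors are chosen, as in Theorem~\ref{thm:UndiNPC}, so that an optimal target placement corresponds to assigning each commodity to a triple covering all of its sources, and the element-incidence arcs (again of unit capacity, and now one-way) force distinct commodities toward distinct triples.

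The perfect-matching direction is the easy half and mirrors Part~1 of the proof of Theorem~\ref{thm:UndiNPC}: given a perfect matching $\{w_1,\dots,w_k\}$, I route commodity $i$ entirely through $w_i$; since the matching is perfect, the triple vertices, the element arcs, and the bottleneck arcs are pairwise unshared, every arc carries at most its capacity, and $\lambda=1$. The no-matching direction is where I must prove $\lambda\le\frac12$. By a pigeonhole argument, if no perfect matching exists then in every feasible placement some unit-capacity arc is used by two different commodities (or some commodity must route through two triples). Here directedness is decisive: on an arc the flow value lies in $\preal^A$ and is non-negative, so two commodities forced onto a common capacity-$1$ arc jointly impose congestion $2\lambda$, whence $\lambda\le\frac12$. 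I would make this quantitative via Lemma~\ref{le:cutflow}, choosing a cut across the bottleneck arcs whose total directed capacity can be respected only if some arc carries $2\lambda$, and showing no rerouting avoids this when $W$ has no perfect matching.

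The main obstacle is exactly this $\lambda\le\frac12$ bound. In the undirected model, flows on a shared edge may partially cancel, which is precisely why the construction of Theorem~\ref{thm:UndiNPC} yields only hardness at the threshold $\lambda\ge1$ and no constant gap; my construction must instead exploit the one-way nature of arcs and the non-negativity of $\flow\in\preal^A$ to convert forced ``sharing'' into genuine additive congestion. The delicate step is ruling out clever fractional splittings that spread load across many triples to dodge any single doubled arc; I expect to close this by a global capacity count on the bottleneck arcs combined with the cut identity of Lemma~\ref{le:cutflow}, showing that the total demand that must cross the triple layer exceeds twice the available one-way capacity unless a perfect matching exists.
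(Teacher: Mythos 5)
Your overall strategy---a gap-introducing reduction from 3-DM with optimum $1$ versus at most $\frac{1}{2}$, exploiting the non-negativity of directed flows---is exactly the paper's, and your completeness half matches. But there are two genuine problems. First, the construction: inserting a unit-capacity bottleneck arc at each triple $w$ is inconsistent with completeness. A commodity targeting $w$ must deliver three units of demand (one from each of $t_X,t_Y,t_Z$) into $w$ at $\lambda=1$; if all of this must traverse a dedicated capacity-$1$ arc then $\lambda\le\frac{1}{3}$ even when a perfect matching exists, and if the target may instead be placed before the bottleneck, the arc constrains nothing. The paper needs no such gadget: it simply deletes the primed vertices $t'_X,t'_Y,t'_Z$ and their incident edges from the graph of Theorem~\ref{thm:UndiNPC}, orients every remaining edge upward, gives every arc capacity $1$, and uses $k=|X|$ identical supply vectors with sources $\{t_X,t_Y,t_Z\}$.

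Second, and more importantly, the soundness bound $\lambda^*\le\frac{1}{2}$ is the heart of the theorem and you leave it unproved: you ``expect to close this by a global capacity count on the bottleneck arcs.'' A global count cannot work. At $\lambda=1$ the total demand crossing into the triple layer is $3k$, while that layer has $3|W|\ge 3k$ unit-capacity arcs (or $|W|\ge k$ bottleneck arcs in your version), so no aggregate comparison yields a contradiction. The paper's argument is local, and uses directedness in a different place than you anticipate: since the digraph is layered with all arcs pointing upward, any target receiving flow from all of $t_X,t_Y,t_Z$ must lie in $W$; and if no perfect matching exists, some element, say $x_1$, lies in $w_1\cap w_2$ for two chosen target triples. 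Flow of commodity $1$ or $2$ sent along any arc $(t_X,x_j)$ with $j\ge 2$ can never reach $w_1$ or $w_2$, because from $x_j$ one can only move up to triples containing $x_j$. Hence by Lemma~\ref{le:cutflow} both $\flow_1$ and $\flow_2$ must carry $\lambda^*$ units on the single capacity-$1$ arc $(t_X,x_1)$, giving $2\lambda^*\le 1$. This local reachability argument is precisely what rules out the ``clever fractional splittings'' you worry about; without it, or a worked-out substitute, the proof is incomplete.
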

\begin{proof}
We establish a reduction from 3-DM to \dprob~and show that the solutions to \dprob~has a big gap indicating whether or not a perfect matching exists. 

Arbitrarily fix an instance $(X,Y,Z,W)$ of 3-DM. We construct an capacitated directed graph $G=(V,A,\capa)$ and $k=|X|$ supply vectors $\dem_1,\cdots,\dem_k$. Specifically, as illustrated in Figure \ref{fig:DiLowerBound}, $G$ is adapted from the undirected graph in Figure \ref{fig:UndiNPC}, up to two modifications:
\begin{itemize}
\item The red parts, namely, vertices $t'_X,t'_Y,t'_Z$ and their incident edges, are removed.
\item All the arcs are directed upward, as indicated by the arrows.
\end{itemize}
All the arcs has capacity 1. Define supply vectors $\dem_1=\cdots=\dem_k$ such that for any $v\in V$, 
$$\dem_1(v)=\begin{cases}
                       -1& \textrm{if }v\in \{t_X,t_Y,t_Z\}\\
                       0& \textrm{otherwise}
                  \end{cases}.$$
\begin{figure}
  \centering
  \includegraphics[width=0.6\textwidth]{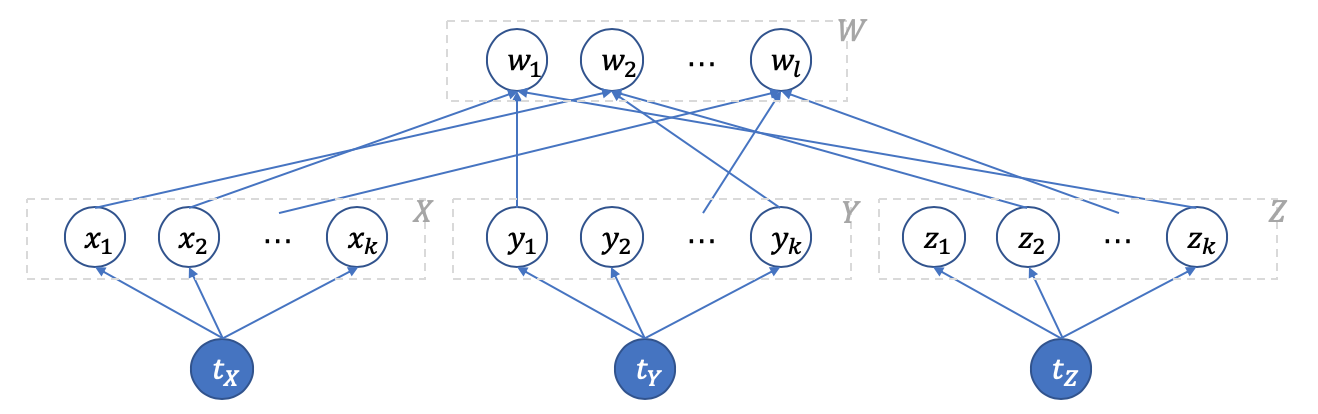}
\caption{The directed graph to which the 3-DM is reduced.}\label{fig:DiLowerBound}
\end{figure}
Our theorem immediately holds if we have the following two facts:

\textbf{Fact 1:} If $W$ contains a perfect matching, $\lambda(G;\dem_1,\cdots,\dem_k)\ge 1$.

To prove this fact, suppose without loss of generality that $\{w_1,\cdots,w_k\}$ is a perfect matching in $W$. For any $1\le i\le k$, assume $w_i=\{x,y,z\}$ with $x\in X,y\in Y,z\in Z$, and define a flow $\flow_i$ such that for any arc $a\in A$,
$$\flow_i(a)=\begin{cases}
                       1& \textrm{if }a\in \{(t_X,x),(x,w_i),(t_Y,y),(y,w_i),(t_Z,z),(z,w_i)\}\\
                       0& \textrm{otherwise}
                  \end{cases}.$$
It is straightforward to check that the multi-commodity flow $\{\flow_i:1\le i\le k\}$ is valid and satisfies $\{\dem_i\circ w_i:1\le i\le k\}$. Hence, $\lambda(G;\dem_1,\cdots,\dem_k)\ge 1$.

\textbf{Fact 2:} If $W$ contains no perfect matching, $\lambda^*=\lambda(G;\dem_1,\cdots,\dem_k)\le \frac{1}{2}$.

Let $v_1,\cdots,v_k\in V$ be such that $\lambda(G;\dem_1\circ v_1,\cdots,\dem_k\circ v_k)=\lambda^*$. One immediately sees that $v_i\in W$ for any $1\le i\le k$, unless $\lambda^*=0$. Without loss of generality, assume that $v_i=w_i$ for any $1\le i\le k$.

Let $\{\flow_i:1\le i\le k\}$ be a valid multi-commodity flow that satisfies $\{\lambda^*\dem_i\circ w_i:1\le i\le k\}$.

Since $\{w_i:1\le i\le k\}$ is not a perfect matching,  there must be $v\in X\bigcup Y\bigcup Z$ such that $|\{i:1\le i\le k, v\in w_i\}|\ge 2$. Again without loss of generality, assume that $v=x_1\in X$ and $v\in w_1\bigcap w_2$. For any $i\in \{1,2\}$, one can observe that $\flow_i(t_X,x_j)=0$ for any $2\le j\le k$, because a flow on such an arc can not reach $w_1$ or $w_2$. 

Then by Lemma \ref{le:cutflow}, $\flow_1(a)=\flow_2(a)=\lambda^*$ where $a=(t_X,x_1)$. Considering that $1=\capa(a)\ge \flow_1(a)+\flow_2(a)$, we have $\lambda^*\le \frac{1}{2}$. 
\end{proof}

To investigate the borderline of the intractability of \dprob, one might impose restrictions on instances to make them \emph{simple}. One dimension of simplification is to upper bound the source number of the supply vectors. When every supply vector has only one source, the sources altogether form a trivial optimum solution to \dprob. Hence it is reasonable to focus on bi-source supply vectors, namely those each having at most two sources. Another dimension of simplification is to focus on simple graphs, so directed trees (called di-trees) are natural candidates. A di-tree is a directed graph which, after removing the directions of the arcs and neglecting multi-edges, becomes an undirected tree. A di-path can be defined likewise. To make our result as strong as possible, we further require that the di-trees are symmetric. A capacitated directed graph is called symmetric, if (1) all arcs have equal capacity, and (2) once having an arc $(u,v)$, it also has the twin arc $(v,u)$. We will show that \dprob~remains hard even on these nearly trivial instances.

Before continuing, recall the 3-partition problem, which is well-known to be strongly NP-hard \cite[page~99]{gary1979computers}. An instance of the 3-partition problem is a multi-set $S$ of positive integers with $|S|=3m$ for some integer $m$. The objective is to decide whether $S$ has an equi-partition, namely a partition $S_1,\cdots,S_m$ of $S$ such that $\sum_{s\in S_i}s=\sum_{s\in S_j}s$ for any $1\le i,j\le m$.
\begin{theorem}\label{thm:2DiNPC}
\dprob~is NP-hard on symmetric di-paths and bi-source supply vectors
\end{theorem}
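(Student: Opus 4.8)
The plan is to reduce from the strongly NP-hard 3-partition problem to \dprob~restricted to symmetric di-paths with bi-source supply vectors. Given an instance $S=\{s_1,\ldots,s_{3m}\}$ of 3-partition with target bucket sum $B=\frac{1}{m}\sum_{j}s_j$, I would build a symmetric di-path (an undirected path whose every edge is replaced by its two twin arcs, all of equal capacity) together with one bi-source supply vector per integer. The guiding intuition is that on a di-path the only freedom in \dprob~is \emph{where} to place each target relative to its two sources, and that choice determines in which direction the commodity's flow travels along the shared backbone. By arranging $m$ ``bucket'' regions along the path and placing the two sources of the supply vector for $s_j$ so that routing to a target in bucket $r$ forces $s_j$ units of congestion onto the arcs entering that bucket, a value-$1$ concurrent flow should be achievable exactly when the integers can be distributed into the $m$ buckets with equal load $B$, i.e.\ exactly when $S$ has an equi-partition.

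The key steps, in order, are: (1) fix the path topology and capacities, designating $m$ disjoint bucket gadgets whose connecting arcs have capacity tuned to $B$ (after scaling the supply vectors so that feasibility at $\lambda=1$ encodes the equi-partition constraint); (2) for each $s_j$ define a bi-source supply vector $\dem_j$ whose two source locations are symmetric about the backbone, so that the two possible ``sides'' for its target correspond to committing its weight $s_j$ to one bucket or another, and so that on a di-path there is a \emph{unique} routing of the commodity once the target is chosen; (3) prove the forward direction --- an equi-partition yields targets and an explicit valid multi-commodity flow witnessing $\lambda(G;\dem_1,\ldots,\dem_{3m})\ge 1$ --- by exhibiting the flow and checking congestion against capacity on every arc using that each bucket receives total weight exactly $B$; (4) prove the converse --- if $\lambda\ge 1$ then the induced assignment of integers to buckets is an equi-partition --- by applying Lemma~\ref{le:cutflow} to the cuts separating consecutive bucket gadgets. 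Because the capacities are bounded by $\max_j s_j$, which is polynomial in the \emph{unary} encoding of the 3-partition instance, the reduction is polynomial and the strong NP-hardness of 3-partition transfers.

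The main obstacle will be step (2): forcing the target choice to be genuinely binary and the routing to be deterministic on a \emph{single} path while still using only \emph{symmetric} arcs of \emph{equal} capacity. On a bare path a commodity with two sources and a freely chosen target has only a one-dimensional routing decision, so I expect the gadget must thread each source into the backbone through short capacity-$1$ stubs that pin down exactly which backbone arcs each unit must traverse; the symmetry requirement (every arc present in both directions with identical capacity) means I cannot block a direction outright and must instead rely on the cut argument of Lemma~\ref{le:cutflow} together with the equal-capacity budget to rule out the ``wrong'' routings. Getting the capacity accounting to be tight --- so that $\lambda\ge 1$ is achievable \emph{only} when the loads balance to exactly $B$ in every bucket, with no slack that would let an unbalanced assignment squeak through --- is the delicate part, and I would verify it by checking the flow-conservation identity of Lemma~\ref{le:cutflow} at each separating cut and confirming that the sum of the per-commodity demands crossing that cut must equal the available capacity precisely when the partition is equitable.
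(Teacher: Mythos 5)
Your overall strategy (reduce from strongly NP-hard 3-partition, place the integer commodities on a path, and use Lemma \ref{le:cutflow} on the cuts between consecutive positions to force the loads to balance) matches the paper's, but the proposal has a genuine gap at exactly the point you flag as the ``main obstacle,'' and the fixes you sketch would not work. First, a symmetric di-path requires \emph{all} arcs to have equal capacity, so you cannot ``tune'' the capacity of the arcs entering each bucket to $B$; and threading sources through ``capacity-$1$ stubs'' both violates the equal-capacity condition and destroys the path topology. Second, with only one bi-source supply vector per integer on a bare path, the construction does not force an equi-partition: if each $\dem_j$ has its two sources at the endpoints $v_1,v_m$, each supplying $s_j$, and every arc has capacity $c$, then choosing target $v_{r_j}$ routes $s_j$ rightward across $a_1,\dots,a_{r_j-1}$ and $s_j$ leftward across $a'_m,\dots,a'_{r_j+1}$; the congestion on $a_i$ is $\sum_{j:\,r_j>i}s_j\le mB$ and on $a'_{i+1}$ is $\sum_{j:\,r_j\le i}s_j\le mB$, so taking $c=mB$ makes \emph{every} target assignment feasible and the reduction collapses. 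Your framing of the target choice as ``binary'' (one side or the other) also does not match the $m$-way bucket assignment that 3-partition requires.

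The paper closes this gap with an idea absent from your proposal: $2(m-1)$ auxiliary \emph{blocking} supply vectors. For each $1\le j<m$, $\dem'_j$ puts supply $-(mB+1)$ at $v_j$ and $-(m-j)B$ at $v_{j+1}$, and $\dem''_j$ puts $-jB$ at $v_j$ and $-(mB+1)$ at $v_{j+1}$. Because $mB+1$ exceeds the uniform arc capacity $mB$, Lemma \ref{le:cutflow} forces the target of $\dem'_j$ to be $v_j$ and that of $\dem''_j$ to be $v_{j+1}$, so these commodities necessarily consume $(m-j)B$ units on $a'_{j+1}$ and $jB$ units on $a_j$. This simulates position-dependent residual capacities ($(m-i)B$ rightward on $a_i$ and $iB$ leftward on $a'_{i+1}$) while keeping all arcs at the same nominal capacity $mB$. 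The cut at position $i$ then reads $2mB=\sum_j s_j+iB+(m-i)B\le\capa(a_i)+\capa(a'_{i+1})=2mB$, so all inequalities are tight and $\sum_{j\in J_i}s_j=iB$ for every $i$, which yields the equi-partition. Without such an enforcement gadget (or an equivalent one), the converse direction of your reduction fails.
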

\begin{proof}
We prove the theorem via a reduction from 3-partition problems to \dprob. For this end, given an instance $S=\{s_1,\cdots,s_{3m}\}$ of 3-partition problem, we set about to construct a symmetric di-path and $(5m-2)$ bi-source supply vectors. 

\begin{figure}
  \centering
  \includegraphics[width=0.5\textwidth]{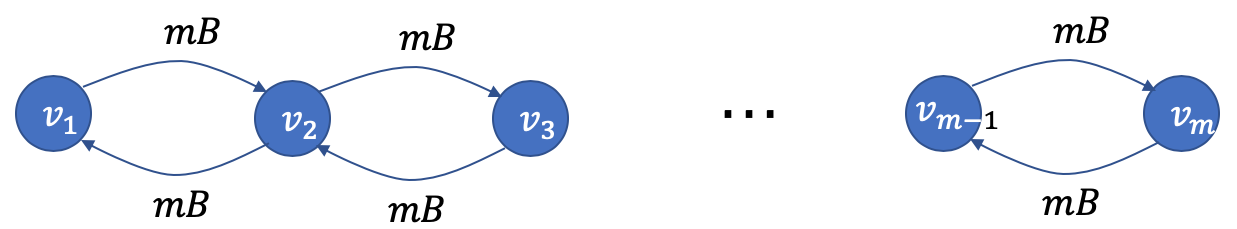}
\caption{The symmetric di-path for reducing 3-partition problem.}\label{fig:symdipathNPC}
\end{figure}
Specifically, as illustrated in Figure \ref{fig:symdipathNPC}, the di-path $G=(V,A,\capa)$ consists of $m$ vertices $v_1,\cdots,v_m$ and arcs $a_i=(v_i,v_{i+1})$ and $a'_{i+1}=(v_{i+1},v_i)$ for any $1\le i<m$. Each arc has capacity $mB$, where $B=\frac{\sum_{s\in S}s}{m}$. For any $1\le i\le 3m$ and $1\le j<m$, define supply vectors $\dem_i,\dem'_j,\dem''_j$ such that for any $v\in V$,
$$\dem_i(v)=\begin{cases}
                       -s_i& \textrm{if }v\in \{v_1,v_m\}\\
                       0& \textrm{otherwise}
                  \end{cases}$$
$$\dem'_j(v)=\begin{cases}
                       -(mB+1)& \textrm{if }v=v_j\\
                       -(m-j)B& \textrm{if }v=v_{j+1}\\
                       0& \textrm{otherwise}
                  \end{cases},$$
$$\dem''_j(v)=\begin{cases}
                       -jB& \textrm{if }v=v_j\\
                       -(mB+1)& \textrm{if }v=v_{j+1}\\
                       0& \textrm{otherwise}
                  \end{cases}.$$     
For notational simplicity, we sometimes use $\dem_{3m+1},\cdots,\dem_{5m-2}$ to stand for $\dem'_1,\cdots,\dem'_{m-1},\dem''_1,\cdots,\dem''_{m-1}$, respectively.

Our proof will be done in two steps. 

\textbf{Step 1.} If $S$ has an equi-partition, then $\lambda(G;\dem_1,\cdots,\dem_{5m-2})\ge 1$.

Let $S_1,\cdots,S_m$ be an equi-partition of $S$. For any $1\le i\le 3m$, let $1\le j\le m$ satisfy $s_i\in S_j$, and we define flow $\flow_i$ such that for any $a\in A$,
$$\flow_i(a)=\begin{cases}
                       s_i& \textrm{if }a\in\{a_k:1\le k<j\}\bigcup\{a'_k:j< k\le m\}\\
                       0& \textrm{otherwise}
                  \end{cases}.$$
One can check that $\flow_i$ satisfies demand vector $\dem_i\circ v_j$.

For any $1\le i\le m-1$, define flows $\flow'_i,\flow''_i$ such that for any $a\in A$,
$$\flow'_i(a)=\begin{cases}
                       (m-i)B& \textrm{if }a=a'_{i+1}\\
                       0& \textrm{otherwise}
                  \end{cases},$$
$$\flow''_i(a)=\begin{cases}
                       iB& \textrm{if }a=a_i\\
                       0& \textrm{otherwise}
                  \end{cases}.$$                                                       
Obviously, $\flow'_i$ satisfies $\dem'_i\circ v_i$, and $\flow''_i$ satisfies $\dem''_i\circ v_{i+1}$.

It is straightforward to check that all these flows form a valid multi-commodity flow. Altogether, we have $\lambda(G;\dem_1,\cdots,\dem_{5m-2})\ge 1$.

\textbf{Step 2.} If $\lambda(G;\dem_1,\cdots,\dem_{5m-2})\ge 1$, $S$ has an equi-partition.

Suppose $u_i,u'_j,u''_j\in V, 1\le i\le 3m, 1\le j\le m-1,$ are such that there is a valid multi-commodity flow $\flow_i,\flow'_j,\flow''_j, 1\le i\le 3m, 1\le j\le m-1,$ satisfying demand vectors $\dem_i\circ u_i,\dem'_j\circ u'_j,\dem''_j\circ u''_j, 1\le i\le 3m, 1\le j\le m-1$.

For any $1\le i\le m$, let $V_i=\{v_1\cdots,v_i\}$. We proceed in two substeps.

\textbf{Step 2.1.} For any $1\le i<m$, $u'_i=v_i$ and $u''_i=v_{i+1}$.

Arbitrary fix $1\le i<m$. For contradiction, assume that $u'_i\notin V_i$. Applying Lemma \ref{le:cutflow} to $\flow'_i,\dem'_i\circ u'_i, V_i$, one get $\flow'_i(a_i)\ge mB+1$, contradictory to the fact that $\capa(a_i)=mB$. Hence, $u'_i\in V_i$. Likewise, one can further show that $u'_i\notin V_{i-1}$. As a result, $u'_i=v_i$.

In a similar way, we also have $u''_i=v_{i+1}$.

\textbf{Step 2.2.} $S$ has an equi-partition.

Arbitrarily fix $1\le i< m$. Applying Lemma \ref{le:cutflow} to $\flow'_i,\dem'_i\circ u'_i, V_i$ and to $\flow''_i,\dem''_i\circ u''_i, V_i$ respectively, one gets 
\begin{eqnarray}\label{equa:occupycapa1}
\flow'_i(a'_{i+1})\ge (m-i)B, \flow''_i(a_i)\ge iB.
\end{eqnarray}

Let $J_i=\{j: 1\le j\le 3m, u_j\in V_i\}$. For any $j\in J_i$, apply Lemma \ref{le:cutflow} to $\flow_j,\dem_j\circ u_j, V_i$, and we have 
\begin{eqnarray}\label{equa:occupycapa2}
\flow_j(a'_{i+1})\ge s_j.
\end{eqnarray}

Likewise, for any $j\notin J_i$, applying Lemma \ref{le:cutflow} to $\flow_j,\dem_j\circ u_j, V_i$ results in  
\begin{eqnarray}\label{equa:occupycapa3}
\flow_j(a_i)\ge s_j.
\end{eqnarray}

Then, 
%
%

\begin{equation}\label{equa:occupycapa4}
\begin{split}
\begin{array}{rll}
2mB&=\sum_{1\le j\le 3m}s_j+iB+(m-i)B& \\
       &=\sum_{j\in J_i}s_j+\sum_{j\notin J_i}s_j+iB+(m-i)B& \\
       &\le \sum_{j\in J_i}\flow_j(a'_{i+1})+\sum_{j\notin J_i}\flow_j(a_i)&\\
       &\quad+\flow''_i(a_i)+\flow'_i(a'_{i+1})& \qquad \textrm{ by (\ref{equa:occupycapa1})-(\ref{equa:occupycapa3})}\\
       &\le \capa(a_i)+\capa(a'_{i+1})=2mB &\qquad \textrm{ by capacity constraints}
\end{array}
\end{split}
\end{equation}

As a result, all the inequalities in (\ref{equa:occupycapa1})-(\ref{equa:occupycapa4}) are actually equalities. Hence, $$\sum_{j\in J_i}s_j=\sum_{j\in J_i}\flow_j(a'_{i+1})=mB-\flow'_i(a'_{i+1})=iB.$$

Let $J_0=\emptyset$ and $J_m=\{j: 1\le j\le 3m\}$. For any $1\le i\le m$, define $S_i=\{s_j: j\in J_i\setminus J_{i-1}\}$, which satisfies $\sum_{s\in S_i}s=\sum_{j\in J_i\setminus J_{i-1}}s_j=\sum_{j\in J_i}s_j-\sum_{j\in J_{i-1}}s_j=B$. This means that $S_1,\cdots,S_m$ is an equi-partition of $S$.\end{proof}
\begin{remark}
Recall Corollary \ref{cor:optimumfor2sources} which implies the tractability of \uprob~on bi-source supply vectors. It is in sharp contrast to the intractability of \dprob~in this situation. Furthermore, Theorem \ref{thm:unditree} claims that \uprob~is polynomial-time solvable when the input graph is a tree, but \dprob~remains NP-hard even on symmetric di-paths. These serves as an evidence that \uprob~is generally harder than \uprob.
\end{remark}

We have seen the hardness of \dprob~even in the nearly-trivial cases. Fortunately, the next theorem will relieve us from frustration, because it indicates the possibility to approximately solve \dprob. A new definition is needed.

Given a capacitated directed graph $G=(V,A,\capa)$, for any $u,v\in V$, let $A_{\{u,v\}}=\{(u,v),(v,u)\}\bigcap A$. Define the induced graph of $G$ to be the capacitated undirected graph $G'=(V,E,\capa')$, where $E=\{\langle u,v\rangle: u,v\in V, A_{\{u,v\}}\neq \emptyset\}$, and for any $e=\langle u,v\rangle\in E$, $\capa'(e)=\sum_{a\in A_{\{u,v\}}}\capa(a)$. Intuitively, $G'$ is obtained from $G$ by neglecting the direction of the arcs and merging the capacities of twin arcs if any. 
\begin{theorem}\label{thm:symdipathAppro}
\dprob~has a polynomial-time 2-approximation algorithm on symmetric di-trees.
\end{theorem}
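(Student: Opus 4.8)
The plan is to reduce \dprob~on a symmetric di-tree $G=(V,A,\capa)$ to \uprob~on its induced undirected tree $G'=(V,E,\capa')$ and then invoke the exact tree algorithm. Since $G$ is symmetric, every undirected edge $e=\langle u,v\rangle\in E$ carries both twin arcs, each of the common capacity $c$, so $\capa'(e)=2c$. The algorithm is: build $G'$, run Algorithm \ref{alg:tree} on $(G',\dem_1,\dots,\dem_k)$ to obtain targets $v_1,\dots,v_k$ that are optimal for the undirected instance (Theorem \ref{thm:unditree}), and output these same vertices as the targets for the directed instance. Both steps are clearly polynomial.

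Writing $\lambda^*_{\mathrm{dir}}=\lambda(G;\dem_1,\dots,\dem_k)$ and $\lambda^*_{\mathrm{und}}=\lambda(G';\dem_1,\dots,\dem_k)$, I would establish two inequalities. First, $\lambda^*_{\mathrm{dir}}\le \lambda^*_{\mathrm{und}}$: given an optimal directed multi-commodity flow $\{\flow_i\}$ with targets $\{u_i\}$, project each $\flow_i$ to an undirected flow $\vec{g}_i$ on $G'$ by taking net flow, i.e.\ $\vec{g}_i(e)=\flow_i(u,v)-\flow_i(v,u)$ for $e=\langle u,v\rangle$. Net flow preserves demand, so $\vec{g}_i$ satisfies $\lambda^*_{\mathrm{dir}}\dem_i\circ u_i$, and the undirected congestion obeys $\sum_i|\vec{g}_i(e)|\le \sum_i\flow_i(u,v)+\sum_i\flow_i(v,u)\le c+c=\capa'(e)$, so $\{\vec{g}_i\}$ is valid on $G'$; hence the undirected optimum is at least $\lambda^*_{\mathrm{dir}}$.

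Second, the output targets $v_1,\dots,v_k$ achieve at least $\lambda^*_{\mathrm{und}}/2$ in the directed instance. Take a valid undirected flow $\{\vec{h}_i\}$ on $G'$ of value $\lambda^*_{\mathrm{und}}$ with targets $v_i$ (which exists by optimality of Algorithm \ref{alg:tree}). On a tree the flow satisfying a fixed demand is unique, so each $\vec{h}_i(e)$ is a definite net flow in one direction; I orient each $\vec{h}_i$ onto the correspondingly directed arc of $G$, placing $|\vec{h}_i(e)|$ on arc $(u,v)$ if $\vec{h}_i$ runs from $u$ to $v$ and on $(v,u)$ otherwise, and then scale all flows by $1/2$. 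The load on arc $(u,v)$ is then at most $\tfrac12\sum_i|\vec{h}_i(e)|\le \tfrac12\capa'(e)=c$, and likewise on $(v,u)$, so the result is a valid directed flow; it satisfies $\tfrac{\lambda^*_{\mathrm{und}}}{2}\dem_i\circ v_i$. Thus the targets we output give directed value at least $\lambda^*_{\mathrm{und}}/2\ge \lambda^*_{\mathrm{dir}}/2$, which is the claimed $2$-approximation.

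The routine parts are the demand-preservation checks of the projection and the orientation, both of which follow from Lemma \ref{le:cutflow} or direct flow-conservation bookkeeping. The crux of the argument—and the only place the factor $2$ enters—is the orientation step: an undirected edge of capacity $2c$ must be split back onto two arcs of capacity $c$ each, and in the worst case all commodities traverse that edge in the same direction, forcing the $1/2$ scaling. I expect the main obstacle to be arguing cleanly that on a tree each commodity's flow has a well-defined orientation per edge (so the directed realization is unambiguous) and confirming that this worst-case congestion is genuinely tight, which is exactly what prevents improving the ratio below $2$.
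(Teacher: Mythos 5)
Your proposal is correct and follows essentially the same route as the paper's proof: project the optimal directed flow to the induced undirected tree by taking net flows (capacities merge to $2c$, so the undirected optimum dominates the directed one), run Algorithm \ref{alg:tree} on the induced tree, and realize the resulting undirected optimum in $G$ by orienting each commodity's edge flow onto the corresponding arc and scaling by $1/2$. The paper's Claims 1 and 2 are exactly your two inequalities, so no further comparison is needed.
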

\begin{proof}
Arbitrarily fix a symmetric di-tree $G=(V,A,\capa)$ and supply vectors $\dem_1,\cdots,\dem_k\in \nreal^V$. Let $G'=(V,E,\capa')$ be the induced graph of $G$ and arbitrarily orient the edges. Suppose $v_1,\cdots,v_k$ be the output of Algorithm \ref{alg:tree} when the input is $(G';\dem_1,\cdots,\dem_k)$. We set about to prove that $v_1,\cdots,v_k$ is a 2-approximate solution to \dprob~on the instance $(G;\dem_1,\cdots,\dem_k)$. 

Let $\lambda^*=\lambda(G;\dem_1,\cdots,\dem_k)$ and $\lambda'^*=\lambda(G';\dem_1,\cdots,\dem_k)$. Our task is reduced to proving two claims.

\textbf{Claim 1.} $\lambda(G;\dem_1\circ v_1,\cdots,\dem_k\circ v_k)\ge \frac{\lambda'^*}{2}$.

Let $\flow'_1,\cdots,\flow'_k\in \real^E$ be a valid multi-commodity flow satisfying $\lambda'^*\dem_1\circ v_1,\cdots,\lambda'^*\dem_k\circ v_k$, where $\lambda'^*=\lambda(G';\dem_1,\cdots,\dem_k)$.

For any $1\le i\le k$, define flow $\flow_i\in \preal^A$ such that for any arc $(u,v)\in A$, 
$$\flow_i(u,v)=\begin{cases}
\frac{ |\flow'_i(e)|}{2}&\begin{array}{l}
               \textrm{if either the orientation of }e=\langle u,v\rangle \textrm{ is from }u \textrm{ to } v \textrm{ and }\flow'_i(e)>0\\
               \textrm{or the orientation is from }v \textrm{ to } u \textrm{ and }\flow'_i(e)<0
          \end{array}\\
                       0& \textrm{otherwise}
                  \end{cases}.$$
It is straightforward to check that $\flow_1,\cdots,\flow_k$ is a valid multi-commodity flow on $G$ that satisfies $\frac{\lambda'^*}{2}\dem_1\circ v_1,\cdots,\frac{\lambda'^*}{2}\dem_k\circ v_k$. Hence, Claim 1 holds.

\textbf{Claim 2.} $\lambda'^*\ge \lambda^*$.

Let $u_1,\cdots,u_k\in V$ be such that there is a valid multi-commodity flow $\flow_1,\cdots,\flow_k\in \preal^A$ on $G$ which satisfies $\lambda^*\dem_1\circ u_1,\cdots,\lambda^*\dem_k\circ u_k$. For any $1\le i\le k$, define flow $\flow'_i\real^E$ on $G'$ as follows: for any edge $e=\langle u,v\rangle\in E$, if it is oriented from $u$ to $v$ in $G'$, set $\flow'_i(e)=\flow_i(u,v)-\flow_i(v,u)$. Roughly speaking, each $\flow'_i$ is obtained from $\flow_i$ by merging traffics on twin arcs.

Again, it is easy to check that $\flow'_1,\cdots,\flow'_k$ is a valid multi-commodity flow on $G'$ that satisfies $\lambda^*\dem_1\circ u_1,\cdots,\lambda^*\dem_k\circ u_k$. This immediately leads to Claim 2.

Combining Claims 1 and 2, we have $\lambda(G;\dem_1\circ v_1,\cdots,\dem_k\circ v_k)\ge \frac{\lambda^*}{2}$, which means that $v_1,\cdots,v_k$ is a 2-approximate solution to \dprob~on the instance $(G;\dem_1,\cdots,\dem_k)$.
\end{proof}
Theorem \ref{thm:symdipathAppro} can be extended to general symmetric directed graphs. Recall the concept \emph{concentration} defined in Remark \ref{rem:etavalidbound}.
\begin{corollary}\label{cor:symmetricdialg}
\dprob~has a polynomial-time $2\cdot\max\{\eta-1,1\}$-approximation algorithm on symmetric directed graphs, where $\eta$ the concentration of the supply vectors.
\end{corollary}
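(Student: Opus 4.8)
The plan is to reduce the general symmetric-directed-graph case to the symmetric-di-tree case already handled in Theorem \ref{thm:symdipathAppro}, combined with the approximation machinery of Lemma \ref{le:mainsourcegood} and Remark \ref{rem:etavalidbound}. Concretely, for an input instance $(G;\dem_1,\cdots,\dem_k)$ on a symmetric directed graph with concentration $\eta$, I would run Algorithm \ref{alg:undigraph} to choose each target $w_i=\argmax_{v\in V}|\dem_i(v)|$, and then argue that this master-source choice is simultaneously good in two senses: the factor $\max\{\eta-1,1\}$ coming from moving all demand to the master source (as in Theorem \ref{thm:undigraphapp}, strengthened via Remark \ref{rem:etavalidbound}), and the factor $2$ coming from the directed-to-undirected passage of Theorem \ref{thm:symdipathAppro}. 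The product of these two factors is exactly the claimed ratio $2\cdot\max\{\eta-1,1\}$.

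The key steps, in order, are as follows. First, form the induced undirected graph $G'=(V,E,\capa')$ of $G$ as defined just before Theorem \ref{thm:symdipathAppro}; since $G$ is symmetric, every undirected edge of $G'$ carries capacity equal to twice the common arc capacity. Second, let $\lambda^*=\lambda(G;\dem_1,\cdots,\dem_k)$ be the directed optimum. By Claim 2 in the proof of Theorem \ref{thm:symdipathAppro}, merging twin arcs shows $\lambda(G';\dem_1,\cdots,\dem_k)\ge\lambda^*$, so the undirected optimum on $G'$ is at least $\lambda^*$. Third, apply the approximation guarantee on the undirected side: by Remark \ref{rem:etavalidbound}, choosing each target to be the master source $w_i$ yields a multi-commodity flow on $G'$ satisfying $\frac{\lambda^*}{\max\{\eta-1,1\}}\dem_i\circ w_i$ for all $i$ simultaneously. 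Fourth, push this undirected flow back onto the symmetric directed graph $G$ exactly as in Claim 1 of Theorem \ref{thm:symdipathAppro}: split each undirected edge-flow by a factor $\tfrac12$ onto the appropriately oriented twin arc, losing a further factor of $2$. Combining, we obtain a valid directed multi-commodity flow on $G$ satisfying $\frac{\lambda^*}{2\max\{\eta-1,1\}}\dem_i\circ w_i$, so the targets $w_1,\cdots,w_k$ form a $2\cdot\max\{\eta-1,1\}$-approximate solution. The whole procedure is polynomial since it only invokes Algorithm \ref{alg:undigraph} and one application of Lemma \ref{le:MuFinP}.

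I expect the main obstacle to be verifying that the two losses compose cleanly rather than interfering. In particular, Lemma \ref{le:mainsourcegood} is stated for a single supply vector and a fixed flow $\flow$ satisfying $\dem\circ u$; here the flow I feed it must be the undirected near-optimal flow obtained after the merge step, and I must ensure the dominance relation $\flow'_i\lesssim\flow_i$ is preserved when I subsequently halve and re-orient each flow for the directed graph. Since both transformations act edge-by-edge and are monotone under $\lesssim$, they commute up to the scalar factors, so the composition is legitimate; but this commutativity is exactly the point that needs an explicit (if routine) check. A secondary subtlety is confirming that $\eta$ as defined in Remark \ref{rem:etavalidbound} is invariant under passing from $G$ to $G'$, which holds because $\eta$ depends only on the supply vectors $\dem_i$ and the master-source entries, not on the graph structure or edge directions.
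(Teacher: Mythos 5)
Your proposal is correct and follows essentially the same route as the paper: the paper's (omitted) proof likewise reuses the two claims from Theorem \ref{thm:symdipathAppro}, merely substituting Algorithm \ref{alg:undigraph} (with the Remark \ref{rem:etavalidbound} bound) for Algorithm \ref{alg:tree} on the induced undirected graph, so the factor $\max\{\eta-1,1\}$ from the undirected approximation composes with the factor $2$ from re-orienting onto twin arcs. The ``commutativity'' worry you raise is harmless, since the two losses are applied sequentially rather than interleaved.
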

\begin{proof}
We follow the proof of Theorem \ref{thm:symdipathAppro}. The only difference is that Algorithm \ref{alg:undigraph} rather than Algorithm \ref{alg:tree} is invoked. This modification is necessary, since Algorithm \ref{alg:tree} is  unfit for general undirected graphs.

The detailed proof is omitted.
\end{proof}

\section{Other Variants of \uprob}
We continue to handle other variants of \uprob, which are defined by extending \uprob~in three dimensions: 
\begin{enumerate}
\item Different network models. In Section \ref{sec:directedgraphs}, we have thoroughly studied directed and undirected graphs. This section will consider the unsplittable flow model, which means that any flow from a source to a target is along one path. Such a flow model has been actively studied in the literature \cite{Kolman2002Improved}. 
\item Different solution constraints. We restrict the targets to be chosen from a candidate set, rather than from the entire vertex set. This properly models the practical situation where applications can be deployed to prescribed servers. Such a restricted version of \uprob~is called restricted-\uprob. 
\item Different optimization goals. The network flow community typically serves three optimization goals: concurrent flow value which proportionately maximizes the flows, total flow value which maximizes the summation of all flows, and feasibility which maximize the number of feasible flows. Since concurrent flow value has been elaborated on in the previous sections, this section will investigate the latter two.
\end{enumerate}

Now we begin to present some results of the variants.

\textbf{Unsplittable flow:} A flow is unsplittable if it can be decomposed into flow paths each of which corresponds to the flow from one source to the target and the correspondence is one-to-one. By a flow path, we mean a flow which has non-zero congestion only along a path, and we say that a flow path passes an edge if the flow has non-zero congestion on the edge.

Since on trees there is a unique path connecting any two vertices, flows on trees are intrinsically unsplittable. Consequently, by Theorem \ref{thm:unditree}, even under the unsplittable flow model, \uprob~on trees is polynomial-time solvable. Actually, all the results in the previous sections remain true under the unsplittable flow model, since all the flows in the proofs are unsplittable. Moreover, stronger results can be obtained. See the following theorem as an example.

\begin{theorem}\label{thm:UnsplitNoPTAS}
Under the unsplittable flow model, \uprob~is NP-hard and cannot be approximated within ratio 2 in polynomial time.
\end{theorem}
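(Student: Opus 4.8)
The plan is to reduce from 3-dimensional matching (3-DM), reusing the undirected construction from the proof of Theorem \ref{thm:UndiNPC} essentially verbatim, and to show that the unsplittable flow model forces a cleaner gap between the ``matching'' and ``no matching'' cases than the splittable model did. Recall that in that construction we built the graph $G$ with subgraphs $H_X,H_Y,H_Z$ joined through $W$, with capacities $1$ on most edges and $|W_e|-1$ on the red edges incident to $t'_X,t'_Y,t'_Z$, and $l=|W|$ supply vectors of two types (those wanting to reach $\{t_X,t_Y,t_Z\}$ and those wanting $\{t'_X,t'_Y,t'_Z\}$). First I would simply observe that every flow exhibited in Part 1 of that proof is already a single path per source--target pair, so it is unsplittable; hence a perfect matching still yields $\lambda\ge 1$ in the unsplittable model. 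The content of the new theorem is therefore entirely in the hardness-of-approximation (gap) direction.

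For the gap, I would argue that when $W$ has \emph{no} perfect matching, the unsplittable optimum is at most $\tfrac{1}{2}$, mirroring Fact 2 of Theorem \ref{thm:DiLowerBound}. Concretely, suppose toward a contradiction that $\lambda^*>\tfrac12$ with optimal targets $v_1,\dots,v_l$ and an unsplittable multi-commodity flow. As in Fact 1 of Theorem \ref{thm:UndiNPC}, the cut argument via Lemma \ref{le:cutflow} still forces each $v_i\in W$ (the congestion counting on $Cut(W)$ is unchanged). The key new leverage is unsplittability: each commodity's flow is a disjoint union of source-to-target paths, and on a tree-like gadget each such path occupies a specific cut edge fully or not at all once $\lambda^*$ is fixed. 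Because $W$ has no perfect matching, by a counting/pigeonhole argument there is some vertex $v\in X\cup Y\cup Z$ lying in two chosen targets $w_1,w_2$ (the same step as in Theorem \ref{thm:DiLowerBound}'s Fact 2); the two corresponding flow paths must both traverse the unique unit-capacity edge from $v$ toward its $t$-vertex, so the combined congestion there is $2\lambda^*>1$, violating the capacity constraint. This forces $\lambda^*\le\tfrac12$.

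Combining the two cases gives a $1$ versus $\tfrac12$ gap: any polynomial-time algorithm approximating \uprob~within a ratio strictly better than $2$ would distinguish $\lambda^*\ge 1$ from $\lambda^*\le\tfrac12$ and hence decide 3-DM, so no such algorithm exists unless P$=$NP. NP-hardness follows from the same reduction (deciding $\lambda^*\ge 1$). I would state the reduction once and then invoke the already-verified validity checks rather than recomputing flows.

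The main obstacle I anticipate is making the ``two paths must share a unit-capacity edge'' step fully rigorous under unsplittability: I must verify that the edge-incidence structure of $H_X,H_Y,H_Z$ genuinely forces each chosen path through the single edge $\langle t_X,x_1\rangle$ (or its $Y,Z$ analogues), i.e.\ that no alternative routing avoids the shared bottleneck. This is exactly where unsplittability is doing work that the splittable proof handled by the more delicate full-congestion argument of Fact 2 in Theorem \ref{thm:UndiNPC}; the unsplittable version should be cleaner because a path cannot hedge its flow across several edges, but I would need to confirm the gadget leaves no detour. A secondary point to check is that the construction size and the bound $\tfrac12$ are robust to the exact capacities on the red edges, and that the two supply-vector types do not interfere to inflate $\lambda^*$ in the no-matching case.
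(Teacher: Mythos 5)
Your overall plan coincides with the paper's: reuse the construction of Theorem \ref{thm:UndiNPC} verbatim, observe that the Part-1 flows are already unsplittable, and prove a $1$ versus $\tfrac12$ gap when no perfect matching exists. However, the two steps you lean on in the gap direction both break. First, the claim that ``the cut argument via Lemma \ref{le:cutflow} still forces each $v_i\in W$'' is only valid under the hypothesis $\lambda^*\ge 1$ (where a single misplaced target pushes the congestion on $Cut(W)$ to $3l+1>3l$); under the weaker hypothesis $\lambda^*>\tfrac12$ the same count only gives $(3l+1)\lambda^*\le 3l$, i.e.\ $\lambda^*\le \tfrac{3l}{3l+1}$, which is far from a contradiction. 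So in the gap regime you cannot assume the targets lie in $W$, and the case $v_i\notin W$ must be handled separately. Second, your key step --- that the two flow paths serving $w_1$ and $w_2$ with $x_1\in w_1\cap w_2$ ``must both traverse the unique unit-capacity edge from $x_1$ toward its $t$-vertex'' --- is false as stated: $H_X$ is complete bipartite between $X$ and $\{t_X,t'_X\}$, so a path from $t_X$ can zigzag through $t'_X$ and other $X$-vertices, or even leave $H_X$ through a different $w\in W$ and re-enter; the gadget genuinely does leave detours, contrary to what you hope to ``confirm.''

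The correct repair, which is what the paper does, is not to exclude detours but to charge them. Let $E_W$ be the $3l$ unit-capacity edges incident to $W$. Each commodity decomposes into exactly three summand paths, for a total of $3l$ paths, and every summand path must use at least one edge of $E_W$ to reach a target in $W$ (and if a target lies outside $W$, or if a path detours, some path uses at least two such edges). Since each path carries $\lambda^*$ on every edge it uses, either (i) some edge of $E_W$ is shared by two paths --- which happens by pigeonhole whenever any path uses more than one $E_W$-edge, whenever some $v_i\notin W$, or whenever two commodities pick the same $w$ (six paths into a degree-$3$ vertex) --- or (ii) every path uses exactly one $E_W$-edge and each target is a distinct $w\in W$, in which case the intersecting pair $w_i\cap w_j\ni x$ really does force both $X$-paths through $\langle t_X,x\rangle$. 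In every branch two paths share a capacity-$1$ edge, so $2\lambda^*\le 1$. Without this global counting over $E_W$ and the explicit case split on repeated or non-$W$ targets, your argument does not go through.
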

\begin{proof}
Roughly speaking, we reduce 3-DM to \uprob, and show that the solutions to LoMuF has a big gap of unsplittable flows indicating whether or not a perfect matching exists.

Basically, we follow the proof of Theorem \ref{thm:UndiNPC}. Given an instance $(X,Y,Z,W)$ of 3-DM with $|X|=k$ and $|W|=l$, let $G=(V,E,\capa)$ be the capacitated undirected graph as constructed in the proof of Theorem \ref{thm:UndiNPC} (illustrated in Figure \ref{fig:UndiNPC}). We also adopt supply vectors $\dem_i, 1\le i\le l$ as in the proof of Theorem \ref{thm:UndiNPC}. For ease of reading, the vectors are redefined here. For any $1\le i\le k,k+1\le j\le l,v\in V$,
$$\dem_i(v)=\begin{cases}
                       -1& \textrm{if }v\in \{t_X,t_Y,t_Z\}\\
                       0& \textrm{otherwise}
                  \end{cases}$$
$$\dem_j(v)=\begin{cases}
                       -1& \textrm{if }v\in \{t'_X,t'_Y,t'_Z\}\\
                       0& \textrm{otherwise}
                  \end{cases}$$       
The rest of the proof consists of two parts.

\textbf{Part 1}: a perfect matching in $W$ implies $\lambda_{uf}(G;\dem_1,\cdots,\dem_k)\ge 1$, where the subscript $uf$ indicates that the objective value is under the unsplittable flow model.

The proof is identical to the counterpart of the proof of Theorem \ref{thm:UndiNPC}, so omitted here. 

\textbf{Part 2}:  If $W$ contains no perfect matching, $\lambda^*=\lambda_{uf}(G;\dem_1,\cdots,\dem_k)\le \frac{1}{2}$.

Suppose the optimum targets are $v_1,\cdots,v_l$, and the multi-commodity flow $\flow_1,\cdots,\flow_l$ is valid and satisfies $\dem_1\circ v_1,\cdots,\dem_l\circ v_l$. Under the unsplittable flow model, for any $1\le i\le l$, $\flow_i=\flow_{i,X}+\flow_{i,Y}+\flow_{i,Z}$ where $\flow_{i,X}$, called a summand path of $\flow_i$, is a flow path from $t_X$ (or $t'_X$ when $l>k$) to $v_i$, and likewise for $\flow_{i,Y}, \flow_{i,Z}$. 

Let $E_W$ be the set of edges incident to vertices in $W$. For $1\le i\le l$, it is easy to observe two facts:  
\begin{description}
\item [Fact 1]: If $v_i\in W$, each summand path of $\flow_i$ has non-zero congestion on at least one edge in $E_W$.
\item [Fact 2]: If $v_i\notin W$, there are two summand paths of $\flow_i$ each having non-zero congestions on at least two edges in $E_W$.
\end{description}

Now we proceed case by case.

\textbf{Case 1:} $v_i\notin W$ for some $1\le i\le l$. By Facts 1 and 2, considering that there are $3l$ edges in $E_W$ and $3l$ summand paths of all the flows, there must be an edge $e\in E_W$ shared by at least two flow paths. Since a flow path has congestion $\lambda^*$ on any edge along it and each edge has capacity 1, we see that $\lambda^*\le \frac{1}{2}$.

\textbf{Case 2:} $v_i=v_j\in W$ for some $1\le i\neq j\le l$. $\flow_i$ and $\flow_j$ altogetger have six summand paths, each of which arrives $v_i$. However, $v_i$ has only three incident edges, so at least two of the summand paths share an incident edge of $v_i$. Again, since a flow path has congestion $\lambda^*$ on any edge along it and each edge has capacity 1, we have $\lambda^*\le \frac{1}{2}$.

\textbf{Case 3:} $v_i$'s lie in $W$ and are pairwise different. Assume without loss of generality that $v_i=w_i$, for any $1\le i\le l$. Because $W$ contains no perfect matching, there exist $1\le i,j\le k$ such that $w_i\bigcap w_j\neq\emptyset$. Again without loss of generality, assume $x\in X\bigcap w_i\bigcap w_j$. 

If $\flow_{i,X}$ does not pass the edge $\langle t_X, x\rangle$, it must pass more than one edge in $E_W$ before reaching $w_i$. Following the argument of Case 1, we see that $\lambda^*\le \frac{1}{2}$. Likewise, we have $\lambda^*\le \frac{1}{2}$ if $\flow_{j,X}$ does not pass the edge $\langle t_X, x\rangle$. 

What's remaining is when both $\flow_{i,X}$ and $\flow_{j,X}$ pass the edge $\langle t_X, x\rangle$. One gets $\lambda^*\le \frac{1}{2}$ due to the capacity constraint on this edge.
\end{proof}

Then we show that restricting targets (i.e., targets can be chosen only in a candidate set of vertices) substantially affects the hardness of target location problems. Since the unrestricted version is a special case of the restricted one, all the hardness results (including the lower bounds of the approximation ratios) remain valid. In fact, restricting targets may make the problems harder, which is confirmed below. Recall Theorem \ref{thm:unditree} which claims that \uprob~on trees is polynomial-time solvable. Nevertheless, with restricted targets, \uprob~on trees even has no PTAS. 

Before going on, let's recall a property of 3-DM. Let $(X,Y,Z,W)$ be an instance of 3-DM. For any $u\in X\bigcup Y\bigcup Z$, define its covering set to be $\xi(u)=\{w\in W: u\in w\}$. It is known that $(X,Y,Z,W)$ remains NP-complete even on 3-covered instances, namely, $\max_{u\in X\bigcup Y\bigcup Z}|\xi(u)|\le 3$ \cite[page~221]{gary1979computers}. 

\begin{theorem}\label{thm:r-unditree}
\uprob~with restricted targets is NP-hard on trees and cannot be approximated within ratio $\frac{7}{6}$ in polynomial-time.
\end{theorem}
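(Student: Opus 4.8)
The plan is to reduce from 3-covered 3-DM, mirroring the construction in Theorem~\ref{thm:UndiNPC} but reshaping it so the underlying graph is a \emph{tree} and so the gap between the yes-case and no-case objective values is at least $\tfrac{7}{6}$. The essential difficulty is that the graph in Figure~\ref{fig:UndiNPC} is far from a tree: the vertices in $W$ create many cycles because each $w=\{x,y,z\}$ touches three element-vertices, and each element-vertex is shared by up to three triples. First I would collapse this to a star-like tree. Concretely, introduce a single root $r$, attach one leaf $\ell_w$ for every $w\in W$, and for each element $u\in X\cup Y\cup Z$ attach a leaf $\ell_u$; all leaves hang directly off $r$, and capacities are placed on the edges $\langle r,\ell_u\rangle$ and $\langle r,\ell_w\rangle$ to encode the incidence structure $u\in w$. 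The candidate target set is restricted to $\{\ell_w: w\in W\}$, so choosing targets amounts to selecting triples, which is exactly what makes ``restricted targets'' bite here even though unrestricted \uprob{} on trees is easy by Theorem~\ref{thm:unditree}.

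Next I would define the supply vectors so that each commodity must route a unit of flow out of each of the three element-leaves corresponding to a matched triple, forcing the selected targets to behave like a matching. As in Theorem~\ref{thm:UndiNPC}, a source at an element-leaf $\ell_u$ can feasibly reach a chosen target-leaf $\ell_w$ only cheaply when $u\in w$; the $3$-coveredness bound $\max_u|\xi(u)|\le 3$ is what keeps the capacities bounded by a constant and hence lets the reduction certify a constant-factor gap rather than a vanishing one. For the completeness direction, a perfect matching $W'$ yields targets $\{\ell_w: w\in W'\}$ and an obvious valid unit multi-commodity flow, giving $\lambda\ge 1$. I would then verify, exactly as in Part~1 of Theorem~\ref{thm:UndiNPC}, that each commodity's canonical decomposition (Lemma~\ref{le:decomposable}) routes along disjoint element-edges, so no capacity is exceeded.

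The heart of the argument is the soundness / inapproximability direction: if $W$ has no perfect matching I must show $\lambda^*\le \tfrac{6}{7}$ (equivalently that the ratio between the two cases is at least $\tfrac{7}{6}$). Here I would argue that any assignment of target-leaves to commodities, when $W$ admits no perfect matching, forces some element-leaf $\ell_u$ to be ``over-demanded'': the total amount of commodity that must leave $\ell_u$ across the single edge $\langle r,\ell_u\rangle$ strictly exceeds its capacity whenever the chosen triples cover $u$ more than once, or leave some element uncovered so its supply reroutes through a congested edge. Applying Lemma~\ref{le:cutflow} to the cut $Cut(\{\ell_u\})$ pins the required flow on that single edge to a value exceeding $\capa(\langle r,\ell_u\rangle)$ by a factor of $\tfrac{7}{6}$ once capacities are fixed appropriately (this is where I would tune the integer capacities so $7$ and $6$ appear); the congestion constraint then forces $\lambda^*\le \tfrac{6}{7}$.

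The main obstacle I anticipate is calibrating the capacities so that the gap is \emph{exactly} $\tfrac{7}{6}$ rather than merely some constant: I need the ``good'' configuration to saturate capacities at value $1$ while any infeasible-matching configuration pushes demand $7$ through a capacity-$6$ bottleneck edge (or the analogous normalization), and the $3$-covered restriction is precisely the lever that makes the worst-case overload a clean ratio. A secondary subtlety is ruling out clever reroutings: because the graph is a tree, Lemma~\ref{le:cutflow} makes every edge a forced bottleneck (there is a unique path between any two leaves through $r$), so the flow on each edge is completely determined by the target choices; I would exploit this rigidity to argue no alternative routing can evade the bottleneck. Once the capacity arithmetic is fixed, combining the two cases gives NP-hardness and the $\tfrac{7}{6}$ inapproximability bound under $\mathrm{P}\neq\mathrm{NP}$.
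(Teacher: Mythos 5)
Your high-level plan (reduce from 3-covered 3-DM, build a star rooted at $r$ with the triples as the candidate targets, and force a ``$7$ units through a capacity-$6$ edge'' bottleneck) matches the paper's, but the construction you sketch has a gap that would sink the reduction: in a star, the incidence relation $u\in w$ cannot be encoded by edge capacities. If you keep a separate leaf $\ell_u$ for each element and a leaf $\ell_w$ for each triple, then every $\ell_w$ is reachable from every $\ell_u$ along the unique path through $r$ at identical cost, and a capacity on $\langle r,\ell_u\rangle$ or $\langle r,\ell_w\rangle$ is a single number that can at best record a degree, not \emph{which} triples contain which elements. Consequently nothing forces the chosen targets to respect the hypergraph, and an assignment sending any three element-commodities to each selected leaf looks exactly like a perfect matching. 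The paper avoids this by having no element vertices at all: the tree is just $r$ plus one leaf per triple, every edge has capacity $6$, and each element $u$ becomes a commodity whose supply is spread over the leaves containing it (one unit at each $w\in\xi(u)$, plus $|\xi(u)|-3$ at the root so the total delivered to the target is $3$). The incidence structure thus lives in the supply vectors, and $3$-coveredness is exactly what keeps the root entry non-positive.

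A second missing ingredient is the $l-k$ filler commodities, each with supply $-3$ at the root. Without them the soundness direction fails outright: each element commodity can simply target some leaf containing it, at most three elements per leaf, and the congestion on any edge $\langle r,w\rangle$ is then at most $3+3=6$, so $\lambda=1$ is achievable even when no perfect matching exists. The fillers occupy $l-k$ leaves with congestion $6$ each, which squeezes the element commodities onto at most $k$ leaves; the paper's case analysis (via Lemma~\ref{le:cutflow} applied to $Cut(\{w\})$, giving congestion at least $(3+n+3n')\lambda^*$ on $\langle r,w\rangle$) then shows that any collision --- a filler meeting an element commodity, four elements on one leaf, or an element targeting a leaf not containing it --- pushes some capacity-$6$ edge to congestion at least $7\lambda^*$, yielding $\lambda^*\le\frac{6}{7}$, and the only way to avoid every such collision is for the chosen leaves to form a perfect matching. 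You correctly anticipate the rigidity of flows on trees and the final $7$-versus-$6$ arithmetic, but without encoding incidence in the supply vectors and without the filler commodities the reduction does not go through.
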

\begin{proof}
We prove the theorem by reducing 3-DM to \uprob. 

Arbitrarily fix a 3-covered instance $(X,Y,Z,W)$ of 3-DM. Let $k=|X|$ and $l=|W|$ with $W=\{w_1,\cdots,w_l\}$. We will construct an instance of \uprob~with restricted targets, including a capacitated undirected graph $G=(V,E,\capa)$, $l+2k$ supply vectors, and a \emph{candidate set} of vertices in which the targets of the supply vectors can be located.

Specifically, as illustrated in Figure \ref{fig:RTreeNPC}, $G$ is a tree consisting of a root $r$ and the set $W$ of leaves, and the capacity of each edge is 6. All the edges are oriented from leaves to the root.
\begin{figure}
  \centering
  \includegraphics[width=0.3\textwidth]{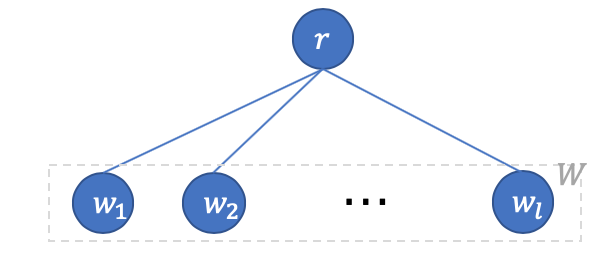}
\caption{The tree for reducing 3-DM.}\label{fig:RTreeNPC}
\end{figure}

Let $U=X\bigcup Y\bigcup Z$. For any $u\in U$, define a supply vector $\dem_u$ such that for any $v\in V$,
$$\dem_u(v)=\begin{cases}
                       -1& \textrm{if }u\in v\in W\\
                       |\xi(u)|-3& \textrm{if }v=r\\
                       0& \textrm{otherwise}
\end{cases}.$$
For any $1\le i\le l-k$, define a supply vector $\dem_i$ such that for any $v\in V$,
$$\dem_i(v)=\begin{cases}
                       -3& \textrm{if }v=r\\
                       0& \textrm{otherwise}
\end{cases}.$$
Let the candidate set be $W$, i.e., we are not allowed to choose the root as targets.

The rest of the proof consists of two parts.

\textbf{Part 1}: If $W$ contains a perfect matching, $\lambda_{W}(G;\dem_1,\cdots,\dem_{l+2k})\ge 1$, where the subscript $W$ indicates that the candidate set for the targets is $W$.

Without loss of generality, suppose $W'\subseteq W$ is a perfect matching. Let $\phi: U\rightarrow \{1,\cdots,l\}$ be the mapping such that $u\in w_{\phi(u)}\in W'$ for any $u\in U$.  For any $u\in U$, define flow $\flow_u$ such that for any edge $e=\langle r,w\rangle$,
$$\flow_u(e)=\begin{cases}
                       -2& \textrm{if }w=w_{\phi(u)}\\
                       1& \textrm{if }w\in\xi(u)\setminus\{w_{\phi(u)}\}\\
                       0& \textrm{otherwise}
\end{cases}.$$
One can check that $\flow_u$ satisfies the demand vector $\dem_u\circ w_{\phi(u)}$.

Then arbitrarily fix a bijective mapping $\psi:\{1,\cdots,l-k\}\rightarrow \{1,\cdots,l\}\setminus \phi(U)$. For any $1\le i\le l-k$, define flow $\flow_i$ such that for any edge $e=\langle r,w\rangle$,
$$\flow_i(e)=\begin{cases}
                       -3& \textrm{if }w=w_{\psi(i)}\\
                       0& \textrm{otherwise}
\end{cases}.$$
One can check that $\flow_i$ satisfies the demand vector $\dem_i\circ w_{\psi(i)}$.

Furthermore, it is easy to see that the $l+2k$ flows form a valid multi-commodity flow. Hence we finishes the proof of Part 1.

\textbf{Part 2}:  If $W$ contains no perfect matching, $\lambda^*=\lambda_{W}(G;\dem_1,\cdots,\dem_{l+2k})\le \frac{6}{7}$.

Let $v_u\in W$ for $u\in U$, $v_i\in W$ for $1\le i\le m-n$ be such that there is a valid multi-commodity flow $\flow_u$ for $u\in U$, $\flow_i$ for $1\le i\le m-n$ satisfying $\lambda^*\dem_u\circ v_u$ for $u\in U$, $\lambda^*\dem_i\circ v_i$ for $1\le i\le m-n$.

First of all, for any edge $e=\langle r,w\rangle$, we can observe two facts: 
\begin{eqnarray}\label{equa:r-unditree1}
\sum_{u\in U}|\flow_u(e)|\ge (3+n+3n')\lambda^*
\end{eqnarray}
where $n=|\{u:u\in w,v_u=w\}|$ and $n'=|\{u:u\notin w,v_u=w\}|$, and 
\begin{eqnarray}\label{equa:r-unditree2}
\sum_{1\le i\le l-k}|\flow_i(e)|\ge 3m\lambda^*
\end{eqnarray}
where $m=|\{i:1\le i\le l-k,v_i=w\}|$. The detailed proof is omitted since the inequalities are immediate results of applying Lemma \ref{le:cutflow} to $Cut(\{w\})$.

Then we proceed case by case.

\textbf{Case 1}: $v_i=v_u$ for some $1\le i\le m-n, u\in U$.

Let $e=\langle r,v_i\rangle$. By (\ref{equa:r-unditree1}) and (\ref{equa:r-unditree2}), the total congestion on edge $e$ satisfies $\sum_{u\in U}|\flow_u(e)|+\sum_{1\le i\le l-k}|\flow_i(e)|\ge 7\lambda^*$. By capacity constraint on $e$, we have $\lambda^*\le \frac{6}{7}$.

\textbf{Case 2}: $v_i=v_{j}$ for some $1\le i\neq j\le l-k$.

Let $e=\langle r,v_i\rangle$. By (\ref{equa:r-unditree1}) and (\ref{equa:r-unditree2}), the total congestion on edge $e$ satisfies $\sum_{u\in U}|\flow_u(e)|+\sum_{1\le i\le l-k}|\flow_i(e)|\ge 9\lambda^*$. By capacity constraint on $e$, we have $\lambda^*\le \frac{2}{3}$.

\textbf{Case 3}: there exists $w\in W$ such that $|\{u\in U: v_u=w\}|\ge 4$.

Let $e=\langle r,w\rangle$. By (\ref{equa:r-unditree1}), $\sum_{u\in U}|\flow_u(e)||\ge 7\lambda^*$. By capacity constraint on $e$, we have $\lambda^*\le \frac{6}{7}$.

The rest of the proof will assume that none of the Cases 1-3 happens. Let $W'=\{w\in W: v_u=w\text{ for some }u\in U\}$ and $W''=\{w\in W: v_i=w\text{ for some }1\le i\neq j\le l-k\}$. We have 

\begin{eqnarray}\label{equa:r-unditree3}
W'\bigcap W''=\emptyset, |W''|=l-k, |W'|\le k.
\end{eqnarray}

By the pigeon hole principle, one further sees that for any $w\in W'$, $|\{u\in U: v_u=w\}|=3$.

\textbf{Case 4}: there exists $u\in U$ such that $u\notin v_u$.

Let $e=\langle r,v_u\rangle$. Since $|\{u'\in U: v_{u'}=v_u\}|=3$ and $u\notin v_u$, by (\ref{equa:r-unditree1}),  $\sum_{u\in U}|\flow_u(e)||\ge 8\lambda^*$. By capacity constraint on $e$, we have $\lambda^*\le \frac{3}{4}$.

\textbf{Case 5}: None of the above cases happens.

Since $u\in v_u$ for any $u\in U$, $|U|\le |\bigcup_{w\in W'}w|$ which is at most $3k$ due to (\ref{equa:r-unditree3}). Recall that $|U|=3k$, so $|\bigcup_{w\in W'}w|=3k$. As a result, $w\bigcap w'=\emptyset$ for any $w,w'\in W'$, which implies that $W'$ is a perfect matching, contradictory to the assumption that $W$ contains no perfect matching. Therefore, Case 5 never happens.

To sum up all the cases, $\lambda^*\le \frac{6}{7}$. The proof ends.
\end{proof}

The following theorem is also a surprise. In the unrestricted case, if all the supply vectors are uni-source (i.e., each having a single source), a trivial optimum solution to \uprob~is choosing the sources themself as targets. However, when targets are restricted to a prescribed sets, \uprob~becomes NP-hard even on uni-source supply vectors and stars (i.e., trees of depth $1$).

\begin{theorem}\label{thm:unisourceRNPC}
\uprob~with restricted targets is NP-hard on uni-source supply vectors and stars.
\end{theorem}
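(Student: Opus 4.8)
The plan is to reduce from a standard NP-complete problem, and the natural candidate is again 3-DM (as used in Theorems~\ref{thm:UndiNPC} and~\ref{thm:r-unditree}), exploiting the fact that restricting targets to a candidate set lets us force the combinatorial choices even when the supply vectors are as simple as uni-source and the graph is as simple as a star. Given a 3-DM instance $(X,Y,Z,W)$, I would build a star $G$ whose center is the root $r$ and whose leaves correspond to the triples in $W$ (one leaf $\ell_w$ per $w\in W$), with the candidate target set being exactly $\{\ell_w: w\in W\}$; the elements of $U=X\cup Y\cup Z$ will be encoded as uni-source supply vectors, each source placed at $r$ (or at a dedicated leaf attached to $r$), so that the \emph{only} decision left is which candidate leaf each commodity is routed to.

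The key steps, in order, are as follows. First I would fix the star topology and assign each element $u\in U$ a uni-source supply vector $\dem_u$ with its single source located so that the commodity must traverse exactly one center-to-leaf edge to reach its chosen target leaf; because the graph is a star, every flow is automatically a single path (the tree property), so there is no routing freedom beyond the target choice. Second, I would calibrate the edge capacities on the center-to-leaf edges so that a leaf $\ell_w$ can simultaneously absorb the commodities for $u_1,u_2,u_3$ exactly when $w=\{u_1,u_2,u_3\}$ is a genuine triple of $W$, i.e. the capacity and demands are tuned so that three ``compatible'' commodities saturate a leaf edge but any fourth commodity, or any commodity not belonging to that triple, would overload it. Third, I would enforce that every candidate leaf used by a commodity must be one containing that element, so that choosing $\ell_w$ as the target of $u$'s commodity is feasible at concurrency $\lambda\ge 1$ only if $u\in w$; this is the mechanism by which a feasible assignment of targets to leaves corresponds precisely to selecting, for each $u$, a triple $w\in W$ covering $u$ with no leaf over-subscribed.

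With these gadgets in place, the two directions follow the familiar template. For the forward direction, a perfect matching $W'\subseteq W$ yields a target assignment: route each $u\in U$ to the unique $\ell_w$ with $u\in w\in W'$; since $W'$ is a matching each such leaf receives exactly its three elements, the capacities are met, and $\lambda(G;\dem_1,\dots)\ge 1$. For the reverse direction, suppose $\lambda\ge 1$ is achievable; applying Lemma~\ref{le:cutflow} to the cuts $Cut(\{\ell_w\})$ around each leaf shows that each used leaf edge carries at least the total demand of the commodities targeted there, so the capacity calibration forces every used leaf to receive at most three commodities, and the coverage constraint forces those to be exactly an element-covering triple; a counting argument ($|U|=3|X|$ elements distributed with at most three per leaf and each target leaf covering its occupants) then shows the used leaves form a perfect matching.

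The main obstacle I anticipate is the capacity and demand calibration in the uni-source setting: because each supply vector has a single source, I cannot use multi-source interference (as the previous reductions did) to penalize a commodity that ``spreads'' over several leaves, so all the discriminating power must come from the candidate-set restriction together with the per-leaf capacity budget. I would therefore need to choose the demands and the common edge capacity so that the arithmetic ``three compatible commodities exactly fill a leaf, a fourth overflows, and a non-member cannot be placed there'' holds simultaneously; getting these constants to pin down precisely the matching condition — rather than merely a fractional relaxation of it — is the delicate part, and it is where I would spend the most care.
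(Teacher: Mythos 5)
There is a genuine gap in your reduction, and it sits exactly where you yourself flag the difficulty: the step ``enforce that every candidate leaf used by a commodity must be one containing that element.'' On a star with uni-source supply vectors, the only data a commodity carries is a single scalar demand, and the only constraint the network imposes is that the demands assigned to a leaf sum to at most that leaf's edge capacity. That is a knapsack/bin-packing condition, not a set-membership condition. In the earlier 3-DM reductions (Theorems~\ref{thm:UndiNPC} and~\ref{thm:r-unditree}) membership $u\in w$ is enforced either by the graph topology (edges exist only between $x$ and the triples containing it) or by multi-source supply vectors whose sources sit at the specific leaves $\xi(u)$; on a star with every source at the center (or at a private leaf), neither mechanism is available. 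Concretely, if you give every element demand $1$ and every triple-leaf capacity $3$, any assignment placing at most three commodities per leaf is feasible, independent of whether those three elements actually form a triple of $W$; and no choice of distinct demands $d_u$ and capacities $c_w$ fixes this, because the induced condition $\sum_{u\in T}d_u\le c_w$ can only express a threshold on sums, which admits many subsets $T\not\subseteq w$ (e.g.\ with power-of-$M$ encodings the comparison becomes lexicographic and a single ``small'' element always fits under a ``large'' capacity). So the reverse direction of your reduction fails: feasibility at $\lambda\ge 1$ does not force the used leaves to be a perfect matching.

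This is precisely why the paper reduces from the \emph{3-partition problem} rather than 3-DM. There the instance is itself purely numerical: the star has $m$ leaves of capacity $B=\frac{\sum_{s\in S}s}{m}$, each integer $s_i$ becomes a uni-source supply vector of demand $s_i$ at the center, and the candidate set is the leaf set. Since the total demand equals the total capacity $mB$, a valid assignment must saturate every leaf edge exactly, which is literally an equi-partition of $S$ --- no membership gadget is needed. Your forward direction and your use of Lemma~\ref{le:cutflow} on $Cut(\{\ell_w\})$ are fine, but the overall approach cannot be repaired within the uni-source star setting; you should switch the source problem to 3-partition (or another numerical partition problem) as the paper does.
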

\begin{proof}
We prove the theorem via a reduction from 3-partition problem to \uprob. For this end, given an instance $S=\{s_1,\cdots,s_{3m}\}$ of 3-partition problem, we set about to construct an instance of \uprob~with restricted targets, including a capacitated star, $3m$ supply vectors, and a \emph{candidate set} of targets.

\begin{figure}
  \centering
  \includegraphics[width=0.28\textwidth]{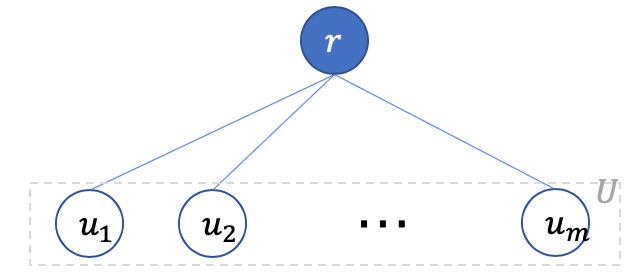}
\caption{The star for reducing 3-partition problem.}\label{fig:unisourceRNPC}
\end{figure}
Specifically, as illustrated in Figure \ref{fig:unisourceRNPC}, the capacitated undirected star $G=(V,E,\capa)$ consists of the center $r$ and the set $U$ of $m$ leaves $u_1,\cdots,u_m$. Orient every edge to point to $r$. Each edge has capacity $B$, where $B=\frac{\sum_{s\in S}s}{m}$. For any $1\le i\le 3m$, define supply vector $\dem_i\in \nreal^V$ such that $\dem_i(r)=-s_i$ and $\dem_i(u)=0$ for any $u\in U$. Appoint $U$ to be the candidate set of targets.

Our proof will be done in two steps. 

\textbf{Step 1.} If $S$ has an equi-partition, then $\lambda_U(G;\dem_1,\cdots,\dem_{3m})\ge 1$.

Let $\{S_1,\cdots,S_m\}$ be an equi-partition of $S$. For any $1\le i\le 3m$, let $1\le j\le m$ satisfy $s_i\in S_j$, and we define flow $\flow_i$ such that for any $e\in E$,
$$\flow_i(e)=\begin{cases}
                       -s_i& \textrm{if }e=\langle r,u_j\rangle\\
                       0& \textrm{otherwise}
                  \end{cases}.$$
One can check that $\flow_i$ satisfies demand vector $\dem_i\circ u_j$.

Since $\{S_1,\cdots,S_m\}$ is an equi-partition of $S$, all these flows form a valid multi-commodity flow. Hence, we have $\lambda_U(G;\dem_1,\cdots,\dem_{3m})\ge 1$.

\textbf{Step 2.} If $\lambda_U(G;\dem_1,\cdots,\dem_{3m})\ge 1$, $S$ has an equi-partition.

Let $v_1,\cdots,v_{3m}\in U$ be such that there is a valid multi-commodity flow $\{\flow_1,\cdots,\flow_{3m}\}$ satisfying $\dem_1\circ v_1,\cdots,\dem_{3m}\circ v_{3m}$. For any $1\le j\le m$, let $I_j=\{1\le i\le 3m: v_i=u_j\}$. Applying Lemma \ref{le:cutflow}, we have $|\flow_i(\langle r,u_j\rangle)|\ge s_i$ for any $i\in I_j$. Due to the capacity constraint, one gets $B\ge \sum_{i\in I_j}|\flow_i(\langle r,u_j\rangle)|$. As a result, $mB\ge \sum_{1\le j\le m}\sum_{i\in I_j}|\flow_i(\langle r,u_j\rangle)|\ge \sum_{s\in S}s=mB$, meaning that $B = \sum_{i\in I_j}|\flow_i(\langle r,u_j\rangle)|$ for any $1\le j\le m$. Hence, $\{S_j=\{s_i:i\in I_j\}: 1\le j\le m\}$ is an equi-partition of $S$.
\end{proof}

Then we discuss the target location version of the maximum multi-commodity problem. Arbitrarily fix supply vectors $\dem_i\in \nreal^V,i\in I$ on a capacitated directed/undirected graph with vertex set $V$, where $I$ is a finite index set. Roughly speaking, we are to locate targets for the supply vectors so as to maximize the total flow values. In particular, we have to find $v_i,i\in I$ to maximize $\sum_{i\in I}\lambda_i\|\dem_i\|_1$, where non-negative reals $\lambda_i$'s are such that 
\begin{enumerate}
\item For any $i\in I$,  there exists a flow $\flow_i$ satisfying the demand vector $\lambda_i\dem_i\circ v_i$, and
\item $\{\flow_i: i\in I\}$ is a valid multi-commodity flow.
\end{enumerate}
It is worth noting that all the preceding results in this paper still hold (and the proofs are also valid), except that we are not sure whether the lower bounds of approximation ratio remain true. 

Finally, we investigate the target location version of the maximum feasibility problem (\umaxfeprob~for short). Intuitively, our goal is to locate the targets so as to maximize the number of satisfiable supply vectors. Formally, given a set $S$ of demand vectors on a capacitated network $G$, its feasibility $\zeta(G;S)$ is defined to be the maximum subset of $S$ that can be simultaneously satisfied, namely, $\zeta(G;S)=\max_{S'\subseteq S, \lambda(G;S')\ge 1}|S'|$. Given supply vectors $\dem_i\in \nreal^V,i\in I$ on a capacitated directed/undirected graph $G$ with vertex set $V$, the task of \umaxfeprob~is to find $v_1\in V,i\in I$ so as to maximize $\zeta(G;\dem_i\circ v_i,i\in I)$. By abusing notation, the optimum objective value will also be denoted by $\zeta(G;\dem_i,i\in I)$. 

We will show that \umaxfeprob~is hard to approximate. The proof relies on a reduction from the well-studied maximum independent set problem (MIS) which aims to find a maximum set of vertices that are pairwise non-adjacent in a given graph. Let's first recall a property of MIS. 

\begin{lemma}[\cite{Hastad1996Clique}]\label{le:MISHard}
For any constant $\epsilon>0$, unless NP=ZPP, MIS can not be approximated within $O(n^{1-\epsilon})$ on graphs of $n$ vertices for any constant $\epsilon>0$.
\end{lemma}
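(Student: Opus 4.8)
The plan is to recall that Lemma~\ref{le:MISHard} is H\aa stad's celebrated inapproximability result for maximum clique (equivalently maximum independent set), so rather than reprove it from scratch I would import it along the PCP paradigm on which its original proof rests. The starting point is a probabilistically checkable proof characterization of NP with very tight parameters: an NP-complete language admits a verifier using $O(\log n)$ random bits that reads only a constant number of proof bits, has completeness close to $1$ and soundness bounded away from it, and---crucially---has small \emph{amortized free-bit complexity}. The cleanest modern route to such a verifier is a long-code construction analyzed by the Fourier expansion of Boolean functions over a Label-Cover (two-prover one-round) base hardened by parallel repetition, which drives the amortized free bits arbitrarily close to $0$.

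From such a verifier I would apply the FGLSS reduction, turning each pair (random string, accepting local view) into a vertex and joining two vertices by an edge whenever their partial assignments are inconsistent. Independent sets in the resulting graph then correspond to consistent collections of accepting views, so the completeness--soundness gap of the PCP becomes a multiplicative gap between the maximum independent set sizes in the ``yes'' and ``no'' instances. With the free-bit complexity pushed toward $0$, the FGLSS ratio $n^{1/(1+\bar f)}$ already approaches $n$; the remaining work is to boost a constant base gap to $n^{1-\epsilon}$.

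The decisive step, and the main obstacle, is this gap amplification. Here I would take (randomized) graph products in the spirit of Berman--Schnitger: raising the FGLSS graph to a suitable product power multiplies the gap while keeping the vertex count controlled, and a randomized sampling of product tuples lets the amplified ratio reach $n^{1-\epsilon}$ for every fixed $\epsilon>0$. The price of this randomized boosting (together with the randomness used to construct the low-amortized-free-bit PCP) is exactly why the conclusion is stated under the hypothesis NP=ZPP rather than the stronger-looking P=NP: a zero-error probabilistic polynomial-time algorithm approximating the problem within $O(n^{1-\epsilon})$ would, through the reduction, decide the NP-complete language. Finally, since an independent set in a graph is a clique in its complement, the clique hardness transfers verbatim to MIS, giving Lemma~\ref{le:MISHard}. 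The hardest part to calibrate correctly is the interplay between the free-bit complexity of the base PCP and the product exponent, so that the amplified ratio genuinely attains $n^{1-\epsilon}$ for \emph{all} $\epsilon$; everything downstream is bookkeeping on the reduction sizes.
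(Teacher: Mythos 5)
The paper does not prove this lemma at all: it is imported verbatim as a black-box citation to H\aa stad's clique inapproximability paper, so there is no internal proof to compare your attempt against. Your outline is a faithful high-level account of how the cited result is actually established in the literature --- a long-code-based PCP over a parallel-repeated two-prover base with amortized free-bit complexity tending to $0$, the FGLSS reduction translating the completeness--soundness gap into an independent-set gap of roughly $n^{1/(1+\bar f)}$, randomized graph products in the Berman--Schnitger style to boost the gap to $n^{1-\epsilon}$, and complementation to pass between clique and independent set --- and your explanation of why the hypothesis is NP $=$ ZPP rather than P $=$ NP (the randomness spent in the construction) is correct. One small calibration: you identify the gap amplification as ``the decisive step,'' but that machinery (FGLSS, randomized graph products, the free-bit framework of Bellare--Goldreich--Sudan) predates H\aa stad; the genuinely new content of the cited work is the PCP verifier achieving amortized free-bit complexity below every constant $\epsilon>0$, after which the amplification is, as you put it elsewhere, bookkeeping. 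As a proof this remains a sketch --- the Fourier-analytic soundness argument for the long-code test is where all the difficulty lives and is not reproduced --- but for a result the paper itself only cites, that level of detail is entirely appropriate.
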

\begin{theorem}\label{thm:undimaxfeasibility}
For any constant $\epsilon>0$, unless NP=ZPP, the \umaxfeprob~problem on $k$ supply vectors cannot be approximated within $O(k^{1-\epsilon})$ in polynomial-time.
\end{theorem}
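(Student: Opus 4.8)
The plan is to reduce MIS to \umaxfeprob~so that the approximation hardness of Lemma \ref{le:MISHard} transfers directly. Given a graph $H=(V_H,E_H)$ with $n=|V_H|$ vertices as an instance of MIS, I would construct a capacitated network $G$ together with $k=n$ supply vectors, one vector $\dem_v$ per vertex $v\in V_H$, in such a way that a set of supply vectors can be simultaneously satisfied (after locating their targets) if and only if the corresponding vertices form an independent set in $H$. This gives $\zeta(G;\dem_v,v\in V_H)$ equal to the size of a maximum independent set of $H$, so that $k=n$ and an $O(k^{1-\epsilon})$-approximation for \umaxfeprob~would yield an $O(n^{1-\epsilon})$-approximation for MIS, contradicting Lemma \ref{le:MISHard} unless NP=ZPP.

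The key design task is the gadget enforcing that two adjacent vertices cannot both be served. First I would introduce, for each edge $\langle u,v\rangle\in E_H$, a shared bottleneck edge of capacity $1$ that any feasible flow for $\dem_u$ and any feasible flow for $\dem_v$ must both saturate (each demanding one full unit across it), so that the two commodities compete for the same unit of capacity and at most one can be satisfied. Each supply vector $\dem_v$ would be built so that its source concentration forces its flow through precisely the bottleneck edges associated with the edges of $H$ incident to $v$; the target location freedom of \umaxfeprob~must not let a commodity evade these bottlenecks, which I would guarantee by making the bottleneck the unique route from $\dem_v$'s sources to any admissible target. Then a subfamily $\{\dem_v: v\in V'\}$ is simultaneously satisfiable exactly when no two of its members share a bottleneck, i.e. when $V'$ is independent in $H$.

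The two directions are then routine: if $V'$ is independent, the per-vertex flows use disjoint bottlenecks and assemble into a valid multi-commodity flow, so $\zeta\ge |V'|$ and hence $\zeta$ is at least the independence number; conversely, applying Lemma \ref{le:cutflow} to the cut isolating each bottleneck edge shows that any two satisfied adjacent commodities would force congestion $2>1$ on their common capacity-$1$ edge, so the set of satisfied commodities must be independent and $\zeta$ is at most the independence number. Combining the bounds gives $\zeta(G;\dem_v,v\in V_H)=\alpha(H)$, the independence number, and the reduction is clearly polynomial with $k=n$.

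The main obstacle I anticipate is engineering the gadget so that the \emph{target location} flexibility does not create an escape route: because the solver may place each target anywhere (subject to any candidate restriction), I must ensure that no choice of targets lets a commodity route around its assigned bottleneck edges or share capacity harmlessly. I would address this by concentrating all of $\dem_v$'s supply on sources whose only outgoing capacity leads through the incident bottleneck edges, and by sizing capacities so that Lemma \ref{le:cutflow} pins the flow on each bottleneck to exactly one unit regardless of the chosen target. Verifying that this rigidity holds for \emph{every} admissible target, and that satisfiability of a subfamily is genuinely equivalent to independence rather than merely implied in one direction, is the delicate part; once the gadget is shown to have this property, the hardness transfer via Lemma \ref{le:MISHard} is immediate.
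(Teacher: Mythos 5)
Your proposal matches the paper's proof in both strategy and mechanism: the paper also reduces from MIS via Lemma \ref{le:MISHard}, attaching to each edge $e$ of the MIS instance a pendant vertex $w_e$ joined by a capacity-$1$ edge and making $w_e$ a unit source of every commodity associated with an endpoint of $e$, so that Lemma \ref{le:cutflow} forces one full unit across that edge regardless of the chosen target and adjacent commodities cannot both be satisfied. Your anticipated ``escape route'' concern is resolved exactly this way in the paper, and the rest of your argument (satisfiable subfamilies $=$ independent sets, hence $\zeta$ equals the independence number with $k=n$) is the paper's argument.
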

\begin{proof}
We prove by reducing MIS to \umaxfeprob. Namely, given a graph $G=(V,E)$, we will construct a capacitated graph $G'=(V',E',\capa)$ and supply vectors $\dem_1,\cdots,\dem_k\in \nreal^{V'}$, where $k=|V|$.

Specifically, $V'=V\bigcup E\bigcup W$, where $W=\{w_e: e\in E\}$. $E'=E'_1\bigcup E'_2$, where $E'_1=\{\langle v,e\rangle: v\in V, e\in E, v\textrm{ is an end of }e\}$ and $E'_2=\{\langle e,w_e\rangle: e\in E\}$. Every edge of $G'$ has capacity 1. The graph $G'$ is illustrated in Figure \ref{fig:undimaxfeasibility}. We choose to orient every edge upward.
\begin{figure}
  \centering
  \includegraphics[width=0.4\textwidth]{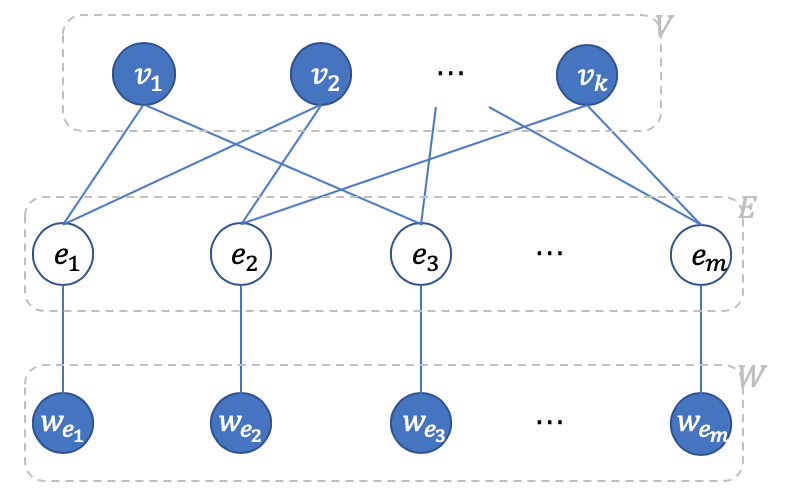}
\caption{The capacitated graph to which a MIS instance is reduced.}\label{fig:undimaxfeasibility}
\end{figure}

For any $1\le i\le k$, define a supply vector $\dem_i$ such that for any $v\in V'$,
$$\dem_i(v)=\begin{cases}
                       -1& \textrm{if }v\in \{v_i\}\bigcup\{w_e: e\textrm{ is incident to }v_i\}\\
                       0& \textrm{otherwise}
                  \end{cases}.$$

Arbitrarily fix a subset $I\subseteq \{1,\cdots,k\}$. We prove two claims:
\begin{itemize}
\item Claim 1: If $\{v_i:i\in I\}$ is an independent set of $G$, then $\lambda(G';\dem_i,i\in I)\ge 1$. 

Suppose $\{v_i:i\in I\}$ is an independent set of $G$. For any $i\in I$, define flow $\flow_i$ such that for any $e'=\langle u,e\rangle\in E'$ with $e\in E$,
$$\flow_i(e')=\begin{cases}
                       1& \textrm{if }u\in\{v_i,w_e\},e\textrm{ is incident to }v_i \textrm{ in }G\\
                       0& \textrm{otherwise}
                  \end{cases}.$$
%
%
%
It is easy to check that $\{\flow_i: i\in I\}$ is a valid multi-commodity flow satisfying $\{\dem_i\circ v_i: i\in I\}$. Hence, $\lambda(G';\dem_i,i\in I)\ge 1$.

\item Claim 2: If $\lambda(G';\dem_i:i\in I)\ge 1$, then $\{v_i:i\in I\}$ is an independent set of $G$. 

Assume $\lambda(G';\dem_i:i\in I)\ge 1$. Choose $u_i\in V',i\in I$ such that there is a valid multi-commodity flow $\{\flow_i:i\in I\}$ which satisfies $\{\dem_i\circ u_i:i\in I\}$. 

We set about to show that $\{v_1,\cdots,v_y\}$ is an independent set of $G$. For contradiction, suppose $i\neq i'\in I$ are such that $v_i$ and $v_i'$ are both incident to $e\in E$ in $G$. By Lemma \ref{le:cutflow}, no matter where $u_i$ lies, we always have $|\flow_i(\langle e,w_e\rangle)|\ge 1$. Likewise, we also have $|\flow_{i'}(\langle e,w_e\rangle)|\ge 1$. Considering that $\capa(\langle e,w_e\rangle)=1$, we reach a contradiction. Claim 2 holds.
\end{itemize}

By Claims 1 and 2, for any $\alpha$-approximate solution to the instance of \umaxfeprob, we can construct an $\alpha$-approximate solution to the instance of MIS, and vice versa. Then the theorem holds due to Lemma \ref{le:MISHard}.
%
\end{proof}

We have a trivial approximation algorithm for \umaxfeprob: Given a capacitated graph $G$ and $k$ supply vectors $\dem_i, 1\le i\le k$, by enumerating, find the first $1\le i\le k$ and $v\in V$ such that $\lambda(G;\dem_i\circ v)\ge 1$. 
This algorithm obviously has approximation ratio $k$, which is \emph{nearly optimum} due to Theorem \ref{thm:undimaxfeasibility}.

\begin{remark}
The above results (including the algorithm) about \umaxfeprob~can be extended to directed graphs. The proofs remain valid up to minor modifications, so detailed proof are omitted here.
\end{remark}


\section{Conclusion}
We formulated the target location problem for multi-commodity flows. It is a natural combination of the classic facility location problem and the multi-commodity flow problem, and extends both. It is interesting in theory and well-rooted in real-world applications. 

We mainly study the issue of maximizing concurrent flows, both on directed and undirected networks. It is interesting to see that the directed case makes the problem harder: the problem is efficiently solvable on undirected trees, but NP-hard on di-paths. Another separation is that the problem is efficiently solvable for bi-source supply vectors on undirected graphs, while it is NP-hard for such supply vectors on directed graphs. We have also made progress on algorithm design: in addition to an exact algorithm on trees, an approximation algorithm is proposed for arbitrary undirected graphs, which leads to algorithms on symmetric directed graphs.

As the first step towards this novel direction, there remain numerous open questions. Just mention a few.
\begin{enumerate}
\item Though an $\eta$-approximation algorithm exists on undirected networks, we know nothing about the lower bound of approximation ratio of the problem. Even whether a PTAS exists remains open. The directed situation is less satisfactory: except a trivial algorithm with approximation ratio $k$, no non-trivial approximation algorithm on general directed graphs is known. 

\item The variants deserve further studying. Since in many applications, targets can be chosen only from a candidate set, restricted version of our problem is of special interest. Cost minimization is an active topic in classic network flow problems. It can be easily defined in our framework, and is a rich research direction. One more variant has not yet been mentioned: This paper allows choosing just one target for each commodity, but what if more targets can be selected? 

\item Online versions of our problem are also well motivated. Recall the scenario of geo-distributed data analysis. The typical case is that the applications arrive sequentially in an online fashion, rather than all at once as we mentioned before. The online fashion poses special challenges in algorithm design. We are even not sure whether an algorithm exists with guaranteed competitive ratio.
\end{enumerate}

\section*{Acknowledgement}
We are grateful to the anonymous referees for detailed corrections and suggestions. We also thank Prof. Yungang Bao for his encouragement and support. Special thanks go to Dr. Laiping Zhao, who helped the authors formulating the problem.

\bibliographystyle{abbrv}
\bibliography{DiscretizationAndGap}


\end{document}